\RequirePackage{etex} 
\documentclass[a4paper,UKenglish,cleveref, autoref, thm-restate]{lipics-v2021}
\usepackage[utf8]{inputenc}
\usepackage{graphicx}
\usepackage{amsmath}
\usepackage{amssymb}
\usepackage{amsthm}
\usepackage{bbm}
\usepackage{cases}
\usepackage{hyperref, graphicx}
\usepackage{comment}
\usepackage{enumerate}
\usepackage{comment}
\usepackage[dvipsnames,table,xcdraw]{xcolor}
\usepackage{tabularx}

\usepackage{tikz, pgfplots}
\usetikzlibrary{positioning}
\usetikzlibrary{graphs,graphs.standard,quotes}
\usepackage{tikz-network}
\tikzstyle{vertex}=[draw,circle,color=black, color=black,inner sep=1.5pt]

\pgfplotsset{compat=1.18}

\usepackage{ulem}


\newcommand{\rate}{{grow pace}}
\newcommand{\sm}{\setminus}
\renewcommand{\emph}[1]{\textbf{#1}}

\newenvironment{proofclaim}[1][]%
	{\noindent {}{#1}{}}{ This proves Claim~(\arabic{claim}).\vspace{1.2ex}}

 \usepackage{soul} 
\newcommand{\Cleo}[1]{{\color{purple} #1}}

\newcommand{\stc}[1]{\setstcolor{purple} \st{ #1}}
\title{How to Color Temporal Graphs to Ensure Proper Transitions}

\author{Allen {Ibiapina}}{Université Paris Cité, CNRS, IRIF, F-75013, Paris, France \and \url{} }{ibiapina@irif.fr}{https://orcid.org/0000-0002-6584-7718}{}

\author{Minh Hang {Nguyen}}{Université Paris Cité, CNRS, IRIF, F-75013, Paris, France \and \url{https://www.irif.fr/~mhnguyen/} }{mhnguyen@irif.fr}{https://orcid.org/0009-0008-2391-029X}{}

\author{Mikaël {Rabie}}{Université Paris Cité, CNRS, IRIF, F-75013, Paris, France \and \url{https://www.irif.fr/~rabie/} }{mikael.rabie@irif.fr}{https://orcid.org/0000-0001-6782-7625}{}

\author{Cléophée {Robin}}{Université Paris Cité, CNRS, IRIF, F-75013, Paris, France \and \url{} }{}{https://orcid.org/0000-0002-1825-0097}{}

\authorrunning{A. Ibiapina, M. H. Nguyen, M. Rabie and C. Robin} 

\Copyright{Allen Ibiapina, Minh Hang Nguyen, Mikael Rabie, Cleophee Robin} 

\ccsdesc[100]{Mathematics of Computing -> Discrete Mathematics -> Graph Theory-> Graph coloring.} 

\keywords{Temporal Graphs, Coloring.} 

\category{} 

\relatedversion{} 



\acknowledgements{The authors thank the members of ANR Tempogral and GrR for their insightful inputs, in particular Marthe Bonamy for her contribution with Theorem~\ref{th:static}. This work was supported by the French ANR project TEMPOGRAL (ANR-22-CE48-0001). The second author has received funding from the European Union's Horizon 2020 research and innovation program under the Marie Skłodowska-Curie grant agreement No 945332.}

\nolinenumbers 


\begin{document}

\SetVertexStyle[MinSize=0.01, FillColor=black, TextFont=\normalsize, InnerSep=1pt]
\SetEdgeStyle[LineWidth=0.25mm, TextFont= \normalsize]

\maketitle

\begin{abstract}
Graph Coloring  consists in assigning colors to vertices ensuring that two adjacent vertices do not have the same color. In dynamic graphs, this notion is not well defined, as we need to decide if different colors for adjacent vertices must happen all the time or not, and how to go from a coloring in one time to the next one.

In this paper, we define a coloring notion for Temporal Graphs where at each step, the coloring must be proper. It uses a notion of compatibility between two consecutive snapshots that implies that the coloring stays proper while the transition happens. Given a graph, the minimum number of colors needed to ensure that such coloring exists is the  \emph{Temporal Chromatic Number} of this graph.

With those notions, we provide some lower and upper bounds for the temporal chromatic number in the general case. We then dive into some specific classes of graphs such as trees, graphs with bounded degree or bounded degeneracy. Finally, we consider temporal graphs where grow pace is one, that is, a single edge can be added and a single other one can be removed between two time steps. In that case, we consider bipartite and bounded degree graphs.

Even though the problem is defined with full knowledge of the temporal graph, our results also work in the case where future snapshots are given online: we need to choose the coloring of the next snapshot after having computed the current one, not knowing what will come later.

\end{abstract}


\section{Introduction}

\subsection{Motivation}

The Vertex Coloring Problem in (static) graphs is a well-studied problem. The goal is to assign colors to the vertices such that two adjacent vertices have different colors. The usual questions concern the complexity of finding a coloring with few colors, and  the \emph{Chromatic Number} of a graph that is the minimum number of colors needed to ensure the existence of a coloring. The coloring can represent a radio frequency allocation: if two transmitter are too close together (i.e. neighbors in the graph), they cannot use the same frequency, because their messages would collide.

In \emph{Temporal Graphs}, or dynamic networks, changes occur over time. More precisely, at each time step, the set of edges between the vertices evolves. In this work, the time will be discretized. This can be done by considering step times when changes occur. The graph with its set of edges at a fixed time is called a \emph{Snapshot}. To illustrate, temporal graphs can model sensors on a system. At each time step, if two sensors are close enough, an edge between the corresponding vertices appears in the graph. A frequency mapping must ensure that sensors too close to each other must use different wave lengths. This can be modeled by assigning a color to every vertex for every time step such that two vertices adjacent in some snapshot must have different colors in this snapshot. We want to ensure some robustness on the coloring such that, through each transition between two snapshots, no collision can happen.

An immediate solution can be to use a proper coloring of the union of all the edges through time. With such assignment, every pair of vertices that are adjacent in at least one snapshot have  different colors at any time. However, this solution can be expensive considering the number of colors used. Indeed, if every possible edge appears at least once, one different color for every vertex will be used. It is possible to add some constraints on the considered graphs to ensure a low chromatic number for each snapshot. For example, a temporal graph such that all snapshot is a tree ensure that every snapshot is 2-colorable. Such constraint directly implies that it is possible to use less colors. In this case, it implies that nodes must change color over time in order to take into account their future neighborhood while relax constraints from previous neighbors.

Through a transition between two snapshots, three operations can occur: the removal of some edges, the addition of some edges and the color change for some vertices. We consider that those transformations are not done simultaneously but one after the other without any information on the order. More precisely, given two vertices that are adjacent in some snapshot and become non-adjacent in the next snapshot, it is not possible for them to have the same color in the next snapshot. This is because their colors might change before the edge is removed. Another consequence is that two vertices that are adjacent cannot exchange their color as one will have to do the change before the other and between the two, both vertices are still adjacent but have the same colors. Dealing with those transformation motivates the notion of \emph{Compatibility} between two snapshots: for any possible sequence of transformations between two snapshots, the coloring have to stay proper at any time.

A \emph{Temporal Coloring} of a temporal graph is a proper coloring of every snapshot that respects the compatibility constraint. The \emph{Temporal Chromatic Number} of a temporal graph $\cal G$, denoted $\chi^t({\cal G})$, is the minimum number of colors needed for a temporal coloring of the graph. We show how to relate this number to the chromatic number of each snapshot and the chromatic number of the union (or \emph{Smashing}) of three consecutive snapshots.

As we have no control over the evolution of the graph between successive snapshots, we may have massive changes. However, in systems of sensors, not so many changes can occur between two snapshots. We introduce a notion to limit the number of edges that can be added or removed between time steps. The \emph{Grow Pace} of a temporal graph is the maximum number of edges added and removed between two snapshots. For example, if a temporal graph has grow pace 1, that means that, between every consecutive snapshot, at most one edge is added and at most one edge is deleted. In general the grow pace is unbounded. We study and provide temporal chromatic number bounds for temporal graphs with grow pace~1. In particular, we get different temporal chromatic numbers for graphs with grow pace 1 compared to temporal graphs without limitations. A question arising is, given a restriction on the snapshots and a number of colors $k$, what is the maximum grow pace such that a temporal graph admits a temporal $k$-coloring. 

Sometimes, we do not know in advance the full temporal graph. In the case of sensors on a system, we might just be able to anticipate the upcoming moves to know how the connections will be two steps ahead, but not further. An \emph{Online} approach can then be considered. When we compute the coloring of a snapshot, we just know the snapshots up to some time, with no knowledge on the edge set of the next snapshots. More precisely, as we need to follow the compatibility restrictions, to compute the coloring of the snapshot at time $t$, in addition to the previous snapshots, we are given the snapshot at time $t+1$. After that, we will be given the snapshot at time $t+2$ to compute the coloring at time $t+1$, and so on.

\subsection{Related Works}
\noindent \textbf{Temporal Graphs.} Dynamic Networks have been given different names and definitions through time. The important element is to encompass the evolution of the nodes and connections through time. In temporal graphs (see for example~\cite{michail2016intro}), time is discrete and the set of nodes does not change.
It can for example simulate transport systems, and one can wonder about the temporal paths and connectivity~\cite{DBLP:conf/sand/BrunelliV23,DBLP:journals/mst/AkridaGMS17}. Other problems from static graphs have been translated to temporal graphs, such as flows~\cite{DBLP:journals/jcss/AkridaCGKS19} or positional games~\cite{DBLP:conf/sirocco/BalevLLPS20}. 

In our work, we study temporal graphs with limited grow pace. Even with this restriction, temporal graphs can still model concrete situations. 
In the model of Wireless Body Area Networks~\cite{movassaghi2014wireless}, the resulting temporal graphs can have limited grow pace.
%
Usually, the snapshots (i.e. the set of edges at each time) are provided in advance, however there has been some work on online algorithms~\cite{DBLP:journals/corr/abs-2401-16800}, where part of the computation needs to be made while the edges are being given. 

Recently, there are lines of works on designing fully dynamic algorithms for classic graph problems that admit greedy solutions, such as maximal matching \cite{baswana2015fully, solomon2016fully}, maximal independent set \cite{behnezhad2019fully, chechik2019fully}, and $(\Delta+1)$ vertex coloring \cite{bhattacharya2018dynamic, bhattacharya2022fully, henzinger2022constant}. In these works, the main purpose is to maintain a graph property in the presence of edge insertion or deletion such that the update time is minimized. When the sequence of edge insertions and deletions is controlled by an oblivious adversary, there are randomized algorithms for $(\Delta+1)$ vertex coloring problem that spend $O(\log \Delta)$ \cite{bhattacharya2018dynamic}, and $O(1)$ \cite{bhattacharya2022fully, henzinger2022constant} amortized update-time. While in the case of adaptive adversary - meaning the sequence of edge insertions and deletions may depend on the output of algorithms, the update-time is $\tilde{O}(n^{8/9})$ for randomized $(\Delta+1)$-vertex-coloring algorithms~\cite{behnezhad2025fully}.  There are other works on fully dynamic algorithms for $O(\Delta)$ edge coloring \cite{barenboim2017fully} that spends $\tilde{O}(\sqrt{n})$ update-time. Other related work~\cite{barba2017dynamic} is  studying the trade-off between the number of colors used and the number of vertex recolorings.

\vspace{1.5ex}

\noindent \textbf{Temporal Coloring.} Several different definitions for vertex coloring in temporal graphs have been studied.  In~\cite{mertzios2021sliding}, a temporal coloring is defined as a sequence of colorings (not necessarily proper) at different time slots of the given temporal graph such that each edge is properly colored in at least one time slot. A proper sliding $\Delta$-window coloring~\cite{mertzios2021sliding} is a temporal coloring such that every edge is properly colored in at least one time slot that is not too far from the time slot when the edge appear. Later, this temporal coloring problem is studied in $t$-resilient and $t$-occurrent temporal graphs \cite{marino2022coloring}. Another definition was introduced in~\cite{yu2013algorithms}, in which, given a temporal graph, the task is to find a sequence of proper colorings at different time slots and minimize the number of colors used and the total number of times a color is changed.

\vspace{1.5ex}

\noindent \textbf{Distributed Recoloring.} Our definition of temporal coloring is inspired by the notion of recoloring~\cite{DBLP:journals/tcs/BonsmaC09, DBLP:conf/iwpec/0002KKPP14}: given a static graph and two colorings, is there a path of reconfiguration where at each step, one node changes its color, and the coloring remains proper. In our definition, the coloring remains proper at each step, but several nodes change their colors between each snapshot, following a notion of compatibility. The notion of compatibility is directly derived from the notion of parallel recoloring in distributed recoloring~\cite{2018distributed}, although it generalises it. In~\cite{2018distributed}, only an independent set of nodes can change their color at each step, to avoid conflicts during transitions as we cannot control in which order each change occurs. Our compatibility definition generalizes this by allowing more than an independent set of nodes to update their color at the same time, as long as there is no conflict, regardless of the order of the updates.

\subsection{Our Contribution}

Given our definition of \textbf{temporal coloring} in temporal graphs, we  provide results for the general setting, as well as in specific situations, depending on properties of each snapshots, and how snapshots evolve.

Given a temporal graph $\cal G$,  $\chi_s$ denotes the maximum chromatic number over all snapshots of the graph.  We prove that  the temporal chromatic number of $\cal G$ is at most  $ \chi_s^3$ ($\chi^t({\cal G})\leq \chi_s^3$). Furthermore, if all unions of 3 consecutive snapshots can be coloured with at most $\chi_{3s}$, we prove that the temporal chromatic number of $\cal G$ is at least $\chi_{3s}$ and at most $2\chi_{3s}$ colors, i.e. $\chi_{3s}\leq \chi^t({\cal G})\leq 2\chi_{3s} $. In the case where each snapshot is duplicated (i.e. the set of edges change every two snapshots), the upper bound becomes $\chi_s^2$.

We provide a transformation to see our problem as a classical problem of coloring in static graphs, with in particular a simple construction to decide if a temporal graph is temporally 2-colorable.

Given a class of temporal graph we investigate the \emph{maximum chromatic number} of the class, that is the maximum temporal chromatic number among all temporal graphs in that class. We prove the following:
\begin{itemize}
    \item Temporal graphs with each snapshot being a tree have maximal chromatic number between 6 and 8.
    \item Temporal graphs with each snapshot being $d$-degenerate have maximal chromatic number between $5d$ and $12d$.
    \item Temporal graphs with each snapshot being $\Delta$-bounded have maximal chromatic number between $3\Delta +1$ and $5\Delta +1$.
\end{itemize}

We also consider temporal graphs with grow pace 1, that is when at most one edge can be added while at most one edge can be removed between every two consecutive snapshots.  We prove the following:
\begin{itemize}
    \item Temporal graphs with grow pace 1 such that each snapshot being $\Delta$-bounded have maximal chromatic number at most $\Delta +2$. This bound is tight for $\Delta\leq 3$.
\end{itemize}

All the algorithms we provide work online: to compute $c_{i}$, we only use the snapshots up to $G_{i+1}$, without any need to anticipate what will come later.

\subsection{Outline}
We first introduce formally our definition of coloring in temporal graphs, with some other notions we will use in the paper, in Section~\ref{sec:def}. In Sections \ref{sec:general}, we give some upper bounds depending on the chromatic number of each snapshot. In Section~\ref{sec:static}, we show how to solve temporal coloring problems using static graphs. Section~\ref{sec:arbitrary} contains the results on Trees, graphs of bounded degree and graphs of bounded degeneracy. Section~\ref{sec:grow1} is devoted to temporal graphs with grow pace one. 
Finally, Section~\ref{sec:conc} provides some open questions rising from this work.

\section{Problem Definitions}\label{sec:def}

A static undirected graph $H = (V,E)$ is defined by a vertex set $V$ and an edge set $E$. In this paper we only consider simple and undirected graphs. Trees are connected acyclic graphs, in which two vertices are connected by exactly one path. Paths are trees whose vertex maximum degree is 2. A \emph{$d$-degenerate} graph is a graph where the vertex set can be ordered, say $v_1,\ldots,v_n$, such that each vertex $v_j$ has at most $d$ neighbors in $\{v_{j+1},\ldots v_n\}$. 
A graph is \emph{$\Delta$-bounded} if all of its vertices have degree at most $\Delta$. 

The \emph{chromatic number} of a static graph $G$, denoted by $\chi(G)$, is the minimum number of colors such that one can assign to each vertex of $G$ a color in $\{1,\ldots, \chi(G)\}$ satisfying no two adjacent vertices have the same color. Note that a $d$-degenerate graph can always be colored with $d+1$ colors. Indeed, by coloring vertices $v_i$ from $n$ to $1$, there are at most $d$ forbidden colors when we color $v_i$. It is also always possible to color $\Delta$-bounded graphs with $\Delta+1$ colors.

In this paper we will assume that the time is discrete. Given an integer $T$, the set of integer $\{1,\dots, T\}$ will be denoted by $[T]$.

A \emph{temporal graph} $\mathcal{G}$ is an ordered pair of disjoint sets $(V,E)$ such that $E \subseteq \binom{V}{2} \times \mathbb{N}$. The set $E$ is called the set of time-edges. At a particular time $t \in \mathbb{N}$, $E_t = \{ e:(e,t) \in E\}$ is the (possibly empty) set of all edges that appear in the temporal graph at time $t$. The set $E_t$ can be used to define a \emph{snapshot} of the temporal graph $\mathcal{G}$ at time $t$, which is the static graph $G_t = (V,E_t)$. The natural number $T = \max\{t \mid E_t \neq \emptyset\}$ is said the \emph{lifetime} of $\mathcal{G}$. A temporal graph with lifetime $T$ is an ordered sequence of static graphs $G_1,G_2,\ldots, G_T$. In this paper, we will usually directly use $G_1,G_2,\ldots, G_T$ to describe $\mathcal{G}$.

Let $G' = (V,E')$ and $G'' =(V,E'')$ be two static graphs with the same vertex set $V$. The graph $G' \cup G''$ is the graph whose vertex set is $V $ and edge set is $E' \cup E''$. Let $\mathcal{G}$ be a temporal graph whose snapshots are $G_1,\ldots,G_T$. The $k$-smashed graph of $G_i, \ldots, G_{i+k-1}$ is defined by $G_i \cup \ldots \cup G_{i+k-1}$. In the rest of this paper, $S_i({\cal G})$ denote the 3-smashed graph of $G_{i-1}, G_i, G_{i+1}$.  Hence $S_i({\cal G}) = G_{i-1} \cup G_i \cup G_{i+1} = (V, E_{i-1} \cup E_i \cup E_{i+1})$.


Let $G_1=(V,E_1)$ and $G_2=(V,E_2)$ be two graphs and let $c_1$ (resp. $c_2$) be a coloring of $G_1$ (resp. $G_2$). We say that $c_1$ and $c_2$ are \emph{compatible} for $G_1$ and $G_2$ if and only if  $\forall uv\in E_1\cup E_2, c_1(u) \neq c_2(v)$. Intuitively, if $v$ changes its color before $u$ while the edge $uv$ is still present, the new color of $v$ must differ from the old color of $u$.

Given a temporal graph $\mathcal{G}=(G_1,G_2,\ldots,G_T)$, a sequence $(c_1,\ldots,c_n)$ is a \emph{$k$-coloring of the temporal graph} $\mathcal{G}$ if,
\begin{enumerate}
    \item For $i \in \{2,\ldots,T-1\}$, $c_i$ is a proper $k$-coloring for $G_{i-1} \cup G_i \cup G_{i+1}$. Moreover, $c_1$ and $c_T$ are proper $k$-colorings of $G_1 \cup G_2$ and $G_{T-1}\cup G_T$ respectively.
    
    \item For $i \in \{1,\ldots,T-1\}$, $c_i$ and $c_{i+1}$ are compatible for $G_i$ and $G_{i+1}$.
\end{enumerate}

The \emph{temporal chromatic number} of a temporal graph $\mathcal{G}$ is the minimum integer $k$ for which there is a $k$-coloring of $\mathcal{G}$, we denote such number by $\chi^t(\mathcal{G})$. In particular, this number is at least as large as the chromatic number of the smashing of any 3 consecutive snapshots, thanks to item 1 of the definition.

We refer to Figure~\Ref{fig:temp_coloring} for an example of a temporal coloring with four consecutive snapshots $G_{1},G_2,G_{3},G_{4}$ using four colors $1,2,3,4$.
The figure illustrates that coloring $c_{3}$ is proper for $3$-smashed graph $G_2 \cup G_{3} \cup G_{4}$. Coloring $c_{3}$ and $c_2$ are compatible in the $2$-smashed graph $G_{2} \cup G_{3}$. 

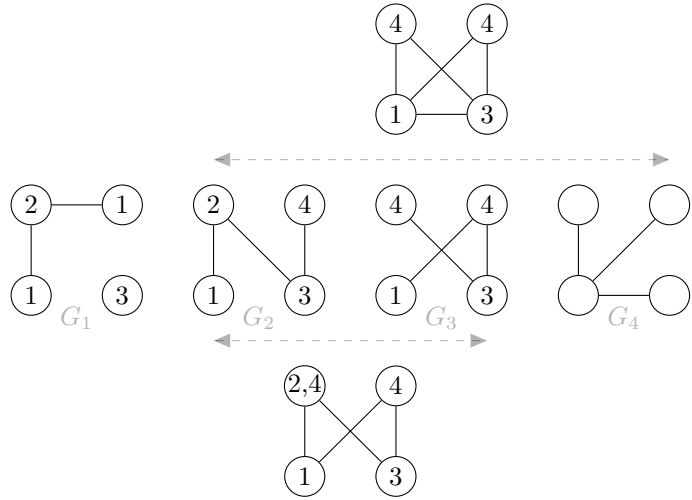
\begin{figure}
    \centering
    \centering
\begin{tikzpicture}[scale=0.6]
\tikzstyle{whitenode}=[draw,circle,minimum size=15pt,inner sep=0pt]
 
\draw (0,0) node[whitenode] (a1)   [] {1};
\draw (2,0) node[whitenode] (a2)   [] {3};
\draw (0,2) node[whitenode] (a3)   [] {2};
\draw (2,2) node[whitenode] (a4)   [] {1};
\draw[opacity=0.3] (1,-0.5) node (a5)   [] {$G_{1}$};

\draw (4,0) node[whitenode] (b1)   [] {1};
\draw (6,0) node[whitenode] (b2)   [] {3};
\draw (4,2) node[whitenode] (b3)   [] {2};
\draw (6,2) node[whitenode] (b4)   [] {4};
\draw[opacity=0.3] (5,-0.5) node (a5)   [] {$G_{2}$};

\draw (8,0) node[whitenode] (c1)   [] {1};
\draw (10,0) node[whitenode] (c2)   [] {3};
\draw (8,2) node[whitenode] (c3)   [] {4};
\draw (10,2) node[whitenode] (c4)   [] {4};
\draw[opacity=0.3] (9,-0.5) node (a5)   [] {$G_{3}$};

\draw (12,0) node[whitenode] (d1)   [] {};
\draw (14,0) node[whitenode] (d2)   [] {};
\draw (12,2) node[whitenode] (d3)   [] {};
\draw (14,2) node[whitenode] (d4)   [] {};
\draw[opacity=0.3] (13,-0.5) node (a5)   [] {$G_{4}$};

\draw (8,4) node[whitenode] (e1)   [] {1};
\draw (10,4) node[whitenode] (e2)   [] {3};
\draw (8,6) node[whitenode] (e3)   [] {4};
\draw (10,6) node[whitenode] (e4)   [] {4};

\draw (6,-4) node[whitenode] (f1)   [] {1};
\draw (8,-4) node[whitenode] (f2)   [] {3};
\draw (6,-2) node[whitenode] (f3)   [] {2,4};
\draw (8,-2) node[whitenode] (f4)   [] {4};

\draw (a1)--(a3)--(a4);
\draw (b1)--(b3)--(b2)--(b4);
\draw (c1)--(c4)--(c2)--(c3);
\draw (d1)--(d2);
\draw (d1)--(d3);
\draw (d1)--(d4);

\draw (e1)--(e3)--(e2)--(e4);
\draw (e2)--(e1)--(e4);

\draw (f1)--(f3)--(f2)--(f4)--(f1);

\draw[>=triangle 45, <->, dashed, opacity=0.3] (4,3) -- (14,3);
\draw[>=triangle 45, <->, dashed, opacity=0.3] (4,-1) -- (10,-1);
    \end{tikzpicture}
    \caption{Temporal coloring $4$ consecutive snapshots $G_{1},G_2,G_{3},G_{4}$. The number $1,2,3,4$ inside every vertex is the color of that vertex in the corresponding snapshots.}
    \label{fig:temp_coloring}
\end{figure}


An \emph{online temporal coloring} algorithm $A$ solves Coloring-Temporal-graphs problem with an instance $\mathcal{G}$ generated by an adaptive adversary if there is a constant $c \in \mathbb{N}$ such that for every $i\in [T]$, $A$ can give a coloring $c_i$ for $G_i$ before time $i+c$, depending on partial knowledge about snapshots $G_1,\ldots,G_{i+c}$ of $\mathcal{G}$. All the proofs we provide in this paper do also hold online, that is when the coloring $c_{i}$ is computed without knowledge of snapshots at times $\ge i+2$.

Given a temporal graph $\mathcal{G} = (G_1, \ldots, G_T)$, the \emph{\rate} of $\mathcal{G}$ is defined as the maximum number of edges added or removed between two consecutive snapshots $G_i$ and $G_{i+1}$, for every $i \in [T-1]$. In other words, the \rate~of a graph $\cal G$ is $\max_{i \in [T-1]} \{|E_{i+1} \setminus E_i|, |E_{i} \setminus E_{i+1}| \}$.

\section{Chromatic number upper bound}\label{sec:general}
Given a temporal graph, we remind that its temporal chromatic number is lower bounded by the maximum chromatic number of each union of three consecutive snapshots. The two following results define upper bounds of temporal chromatic number, based on the chromatic number of each snapshot, and on the chromatic number of unions of three consecutive snapshots.

\begin{theorem} \label{th:cubeub}
Let $\mathcal{G}=(G_1,\ldots,G_T)$ be a temporal graph. 
If for all $i\in[T]$, $\chi(G_i) \leq k$, then $\chi^t({\cal G})\leq k^3$.
\end{theorem}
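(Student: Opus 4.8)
The plan is to build a temporal coloring whose colors are triples in $[k]^3$, so that exactly $k^3$ colors are used. For each $i$, since $\chi(G_i)\le k$, I fix a proper $k$-coloring $\phi_i:V\to[k]$ of the snapshot $G_i$; to treat the endpoints uniformly, I introduce two empty snapshots $G_0$ and $G_{T+1}$ and set $\phi_0\equiv\phi_{T+1}\equiv 1$ (trivially proper on edgeless graphs).

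The naive attempt is to set $c_i(v)=(\phi_{i-1}(v),\phi_i(v),\phi_{i+1}(v))$. This is already proper on $S_i(\mathcal{G})=G_{i-1}\cup G_i\cup G_{i+1}$: any edge $uv$ lies in some $E_j$ with $j\in\{i-1,i,i+1\}$, and then $\phi_j(u)\neq\phi_j(v)$ forces the corresponding coordinate of the two triples to differ. However, this fails the compatibility condition, because the window of $c_i$ and that of $c_{i+1}$ are shifted by one, so comparing equal coordinate \emph{positions} compares different snapshot-colorings. The key fix is to index the three coordinates by the \emph{residue of the time index modulo $3$} rather than by position in the window: since $i-1,i,i+1$ are three consecutive integers, their residues modulo $3$ are exactly $\{0,1,2\}$, so I can define $c_i(v)\in[k]^3$ by placing $\phi_j(v)$ into the coordinate labelled $(j\bmod 3)$ for each $j\in\{i-1,i,i+1\}$. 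Since this is merely a reordering of the naive triple, properness on $S_i(\mathcal{G})$ is unaffected.

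Compatibility then follows from a single observation: the coordinate in which a snapshot-coloring $\phi_j$ is stored depends only on $j\bmod 3$, not on the window. Hence $\phi_i$ sits in coordinate $i\bmod 3$ in both $c_i$ and $c_{i+1}$, and $\phi_{i+1}$ sits in coordinate $(i+1)\bmod 3$ in both. Now take any edge $uv\in E_i\cup E_{i+1}$. If $uv\in E_i$, then $\phi_i(u)\neq\phi_i(v)$, so reading coordinate $i\bmod 3$ already gives $c_i(u)\neq c_{i+1}(v)$; if $uv\in E_{i+1}$, the same argument in coordinate $(i+1)\bmod 3$ applies. This is exactly the required condition $c_i(u)\neq c_{i+1}(v)$ for all $uv\in E_i\cup E_{i+1}$, i.e.\ $c_i$ and $c_{i+1}$ are compatible for $G_i$ and $G_{i+1}$.

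Finally, the endpoint requirements are covered automatically: with the empty graphs $G_0,G_{T+1}$, the coloring $c_1$ is proper on $G_1\cup G_2$ and $c_T$ on $G_{T-1}\cup G_T$ by the same coordinate argument, so $(c_1,\dots,c_T)$ is a temporal $k^3$-coloring. I expect the only genuine obstacle to be spotting the residue-modulo-$3$ indexing: the construction is essentially forced once one sees that a plain sliding window misaligns coordinates across a transition, whereas aligning coordinates to absolute time $\bmod\,3$ makes the two shared snapshots land in matching positions. Note also that $c_i$ depends only on $\phi_{i-1},\phi_i,\phi_{i+1}$, so the algorithm is online.
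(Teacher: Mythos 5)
Your proof is correct and is essentially the paper's own argument: the paper's three-case definition of $c_i$ (splitting on $i \bmod 3$) is exactly your rule of placing $\phi_j(v)$ into the coordinate indexed by $j \bmod 3$, and both properness on $S_i(\mathcal{G})$ and compatibility are verified coordinate-wise in the same way. Your explicit handling of the endpoints via empty snapshots $G_0, G_{T+1}$ is a small tidiness bonus over the paper's presentation, but the underlying construction is identical.
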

\begin{proof}
Let $K = \{ (i,j,t) \mid i,j,t \in [k] \}$ be a set of $k^3$ colors.

For $i\in\{1,\ldots,T\}$, let $x_i : V\rightarrow [k]$ be a proper $k$-coloring of $G_i$. We define a coloring $c_i : V \rightarrow K$ for $i\in [T]$ as follows:
\begin{itemize}
    \item If $i = 0$ mod 3, let $c_i(v) = (x_{i}(v),x_{i+1}(v),x_{i-1}(v))$

    \item If $i = 1$ mod 3, let $c_i(v) = (x_{i-1}(v),x_{i}(v),x_{i+1}(v))$

    \item If $i = 2$ mod 3, let $c_i(v) = (x_{i+1}(v),x_{i-1}(v),x_{i}(v))$

\end{itemize}

\vspace{0.75ex}

\noindent The sequence $(c_1,\ldots,c_T)$ is a temporal $k^3$-coloring of $\mathcal{G}$ since, for every $i \in [T]$:
\begin{enumerate}
    \item Coloring $c_i$ is a proper coloring for $S_i({\cal G})$. Let $uv \in E_{i-1} \cup E_i \cup E_{i+1}$, then either $x_{i-1}(u) \neq x_{i-1}(v)$ or $x_i(u) \neq x_i(v)$ or $x_{i+1}(u) \neq x_{i+1}(v)$. It implies that $c_i(v) \neq c_i(u)$.
    \item If $i\neq T$, $c_i$ is compatible with $c_{i+1}$ in $E_i \cup E_{i+1}$. Let $uv \in E_i \cup E_{i+1}$, then either $x_i(u) \neq x_i(v)$ or $x_{i+1}(u) \neq x_{i+1}(v)$. It implies that $c_i(v) \neq c_{i+1}(u)$.
\end{enumerate}
\end{proof}

\begin{theorem}\label{thm:ub2}
Let $\mathcal{G}=(G_1,\ldots,G_T)$ be a temporal graph. 
If for all $i \in \{2,\ldots,T-1\}$,  $\chi(G_{i-1}\cup G_i\cup G_{i+1})\leq k$ then $\chi^t(\mathcal{G})\leq 2k$.
\end{theorem}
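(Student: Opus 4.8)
The plan is to exploit the hypothesis that three consecutive snapshots can be packed into a single proper $k$-coloring, and then to use \emph{two disjoint palettes} of $k$ colors each so that the compatibility constraint becomes automatic.

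First I would fix, for each $i\in\{2,\ldots,T-1\}$, a proper $k$-coloring $x_i$ of the $3$-smashed graph $S_i({\cal G})=G_{i-1}\cup G_i\cup G_{i+1}$; such a coloring exists by assumption. For the two endpoints I would take $x_1$ to be a proper $k$-coloring of $G_1\cup G_2$ and $x_T$ a proper $k$-coloring of $G_{T-1}\cup G_T$; these exist because $G_1\cup G_2\subseteq S_2({\cal G})$ and $G_{T-1}\cup G_T\subseteq S_{T-1}({\cal G})$, so their chromatic numbers are at most $\chi(S_2({\cal G}))\le k$ and $\chi(S_{T-1}({\cal G}))\le k$ respectively. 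Note that each $x_i$ is in particular proper on every individual snapshot it contains.

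Second, I would split the $2k$ colors into $A=\{1,\ldots,k\}$ and $B=\{k+1,\ldots,2k\}$ and assign a palette by parity: set $c_i=x_i$ when $i$ is even and $c_i(v)=x_i(v)+k$ when $i$ is odd. Verifying the two defining conditions is then immediate. For item~1, $c_i$ is proper on $S_i({\cal G})$ because $x_i$ is and adding the constant $k$ to all colors preserves properness. For item~2, consecutive indices $i$ and $i+1$ have opposite parity, so $c_i$ takes values in $A$ while $c_{i+1}$ takes values in $B$ (or vice versa); since $A\cap B=\emptyset$, we get $c_i(u)\neq c_{i+1}(v)$ for \emph{every} pair of vertices, and in particular for all $uv\in E_i\cup E_{i+1}$. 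Hence $(c_1,\ldots,c_T)$ is a temporal $2k$-coloring and $\chi^t({\cal G})\le 2k$.

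The point that requires the most attention is not really an obstacle but the key observation that disjoint palettes make compatibility free: the whole difficulty of the compatibility constraint, namely that we cannot control the order in which vertices are recolored, disappears once no color is shared between consecutive times. Beyond that, the only things to double check are the two boundary colorings and the fact that shifting a palette preserves properness, both of which are routine. I would also remark that the construction is online, since $c_i$ depends only on $G_{i-1},G_i,G_{i+1}$.
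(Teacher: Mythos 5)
Your proof is correct and follows essentially the same route as the paper: color each $3$-smashed graph $S_i(\mathcal{G})$ with $k$ colors and alternate between two disjoint palettes by parity, so that compatibility holds vacuously because consecutive colorings share no color. Your treatment of the boundary colorings $c_1$ and $c_T$ is in fact slightly more explicit than the paper's, but the argument is the same.
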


\begin{proof}
For each $i \in \{1, \ldots, T\}$, let $c_i$ be a $k$-coloring of $G_{i-1}\cup G_i\cup G_{i+1}$ such that if $i$ is odd,  $c_i:V\rightarrow \{1, \ldots, k\}$ and if $i$ is even, $c_i:V\rightarrow\{k+1, \ldots, 2k\}$. By construction, each $c_i$ is a proper coloring of $(G_{i-1}\cup G_i\cup G_{i+1})\leq k$. For all $i \in [T]$, for all $u,v \in V, c_i(u) \neq c_{i+1} (v)$. Hence $c_i$ and $c_{i+1}$ are compatible in the graph $G_i \cup G_{i+1}$ for all $i \in [T]$. Therefore the sequence $(c_1,\ldots,c_T)$ is a temporal $2k$-coloring of $\mathcal{G}$.
\end{proof}

The following result gives an upper bound on the number of colors needed when we double each snapshot. More precisely, if each snapshot is $k$-colorable, the temporal graph needs at most $k^2$ colors, compared to the $k^3$ from Theorem~\ref{th:cubeub}. This motivates the idea to duplicate snapshots (i.e. take two steps) in order to save some colors. In particular, it allows to change colors while the set of edges is stable and keep the same coloring when the set of edges evolves. More formally, duplicating snapshots corresponds to the following constrain: for all $i\in [T]$, if $E_i\sm E_{i+1}\neq \emptyset$ then $E_{i-1}\sm E_{i}= \emptyset$. In other words, changes on the edges set cannot occur two consecutive time. 

\begin{theorem}\label{th:copy}
Let $\mathcal{G} = (G_1$, $G_2$, \ldots, $G_T)$ be a temporal graph with $T$ even in which $G_{2i-1}=G_{2i}$ for all $i\in[T/2]$ (i.e. there are two consecutive copies for each snapshot). If for all $i \in [T]$, $\chi(G_i) \leq k$, then $\chi^t(\mathcal{G})\le k^2$.
\end{theorem}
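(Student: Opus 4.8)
The plan is to exploit the duplication so that every relevant $3$-smashed graph collapses to a union of only two distinct snapshots. Writing $H_j := G_{2j-1} = G_{2j}$ for the distinct snapshots ($j \in [T/2]$), one checks directly that $S_{2j}(\mathcal{G}) = G_{2j-1}\cup G_{2j}\cup G_{2j+1} = H_j \cup H_{j+1}$ and likewise $S_{2j+1}(\mathcal{G}) = H_j \cup H_{j+1}$, so at every step the coloring only needs to be proper for a union of two consecutive distinct snapshots. Since $\chi(H_j) \le k$, I would fix a proper $k$-coloring $z_j : V \to [k]$ of each $H_j$; then the product $(z_j(v), z_{j+1}(v))$ is a proper $k^2$-coloring of $H_j \cup H_{j+1}$, because for an edge in $H_j$ the first coordinate separates the endpoints and for an edge in $H_{j+1}$ the second one does.

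The key idea is to recolor only during the stable transitions $G_{2j-1}\to G_{2j}$ (where the edge set does not change) and to keep the coloring constant across the evolving transitions $G_{2j}\to G_{2j+1}$. Concretely I would attach to each evolving transition a block coloring $b_j$ built from $z_j$ and $z_{j+1}$ and set $c_{2j} = c_{2j+1} = b_j$ for $j \in \{1,\dots,T/2-1\}$, together with $c_1 := b_1$ and $c_T := b_{T/2-1}$ (the case $T=2$ being trivial, as a single proper $k$-coloring reused twice already works). The only subtlety is the order of the two coordinates: I would alternate it with the parity of $j$, taking $b_j = (z_j, z_{j+1})$ when $j$ is even and $b_j = (z_{j+1}, z_j)$ when $j$ is odd.

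Three things then need checking. Properness of each $c_i$ is immediate, since every $c_i$ equals some $b_j$, and each $b_j$ is, up to coordinate order, the product coloring of $H_j \cup H_{j+1}$, hence proper for the corresponding $3$-smashed graph. Compatibility across an evolving transition $G_{2j}\to G_{2j+1}$ is also immediate, because there $c_{2j} = c_{2j+1} = b_j$ and a coloring proper on $H_j \cup H_{j+1}$ is automatically compatible with itself on $H_j \cup H_{j+1}$. The main obstacle is compatibility across a stable transition $G_{2j-1}\to G_{2j}$: it lives inside the single snapshot $H_j$ and compares the two consecutive blocks $b_{j-1}$ and $b_j$, which share exactly the coordinate $z_j$. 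A short parity case analysis shows that the alternating rule places $z_j$ in the \emph{same} coordinate of $b_{j-1}$ and of $b_j$; since $z_j$ is proper on $H_j$, the endpoints of any edge $uv \in H_j$ already differ in that coordinate, so $b_{j-1}(u) \neq b_j(v)$. Without the alternation $z_j$ would sit in different coordinates of the two blocks and compatibility could genuinely fail, so aligning the shared coordinate is the crux of the argument.

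Finally, the two end times need a separate but trivial check: $c_1 = b_1$ and $c_T = b_{T/2-1}$ must be proper on $G_1 \cup G_2 = H_1$ and on $G_{T-1}\cup G_T = H_{T/2}$, which holds since these are subgraphs of the unions on which $b_1$ and $b_{T/2-1}$ are proper, and the transitions $G_1\to G_2$ and $G_{T-1}\to G_T$ are stable transitions between equal blocks, so compatibility again reduces to properness. Since every color is a pair in $[k]\times[k]$, the sequence $(c_1,\dots,c_T)$ is a temporal coloring using at most $k^2$ colors; and because $c_i$ depends only on $z_j$ and $z_{j+1}$ with $i\in\{2j,2j+1\}$, it can be produced online.
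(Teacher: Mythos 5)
Your proof is correct and follows essentially the same route as the paper: the same decomposition into distinct snapshots $H_j$, the same product coloring with the coordinate order alternating with the parity of $j$ so that the shared coloring $z_j$ sits in the same coordinate of consecutive blocks, and the same strategy of recoloring only across the stable transition inside each duplicated pair. Your handling of the endpoints is in fact slightly more careful than the paper's.
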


\begin{proof}
For each $i\in[T/2]$, set $H_i=G_{2i-1}=G_{2i}$. In particular, $\mathcal{G} = (H_1$, $H_1$, $H_2$, $H_2$, $\ldots$, $H_{T/2}$, $H_{T/2})$.

This proof is similar as the proof of Theorem~\ref{th:cubeub}. Let  $K = \{ (i,j) \mid i,j \in [k] \}$ be the set of $k^2$ colors.

For $i\in[T/2]$, let $x_i : V\rightarrow \{1,\dots k\}$ be a proper $k$-coloring of $H_i$. Let  $(c_1^1$, $c_1^2$, $c_2^1$, $c_2^2$, $\ldots$, $c_{T/2}^1$, $c_{T/2}^2)$ be sequence of colors such that:
\begin{enumerate}
    \item For $i \in [T/2]$, if $i$ is even, $c_i^2(v) = (x_{i}(v),x_{i+1}(v))$; if $i$ is odd, $c_i^2(v) = (x_{i+1}(v),x_{i}(v))$; 
    \item Let $c_1^1 = c_1^2$, and for $i\ge2$, $c_i^1 = c_{i-1}^2$.
\end{enumerate}

We claim that $(c_1^1$, $c_1^2$, $c_2^1$, $c_2^2$, $\ldots$, $c_{T/2}^1$, $c_{T/2}^2)$ is a $k^2$-coloring of $\mathcal{G}$. In particular, we prove that the sequence is compatible and that for $i\in [T/2]$ $c^1_i$ is a proper coloring of $S_{2i-1}({\cal G})$ and $c^2_i$ is a proper coloring of $S_{2i}({\cal G})$.

Recall that $\mathcal{G}$ can be seen as the sequence $(H_1$, $H_1$, $H_2$, $H_2$, $\ldots$, $H_{T/2}$, $H_{T/2})$.

\begin{enumerate}
    \item  For $i\in\{2,\ldots,T/2-1\}$, given a vertex $u$, we have that $c_i^1(u)=(x_{i-1}(u),x_i(u))$ or $c_i^1(u)=(x_i(u),x_{i-1}(u))$. Therefore $c_i^1$ is a proper coloring for $H_{i-1}\cup H_{i}$. Directly from the definition, we have that $c_i^2$ is a proper coloring for $H_i \cup H_{i+1}$.  Since  $S_{2i-1}(\mathcal{G})=H_{i-1}\cup H_{i}$ and $S_{2i}(\mathcal{G})=H_{i}\cup H_{i+1}$, it follows that $c^1_i$ is a proper coloring of $S_{2i-1}({\cal G})$ and $c^2_i$ is a proper coloring of $S_{2i}({\cal G})$. Using same arguments, we also have the result for $i=1$ and $i=T/2$. 
    \item For $i\in [T/2]$, since $c_{i}^1=c_{i-1}^2$, they are directly compatible in $H_i\cup H_{i+1}=G_{2i}\cup G_{2i+1}$. Recall that $G_{2i-1}\cup G_{2i}=H_i$. Hence $c^1_i$ and $c^2_i$ are compatible as, for any edge $uv\in H_i$, $x_i(u)\neq x_i(v)$ and $x_i$ appears on the same coordinate in both $c^1_i$ and $c^2_i$. 
\end{enumerate}

Hence, $(c_1^1$, $c_1^2$, $c_2^1$, $c_2^2$, $\ldots$, $c_{T/2}^1$, $c_{T/2}^2)$ is a $k^2$-coloring of $\mathcal{G}$. Therefore $\chi^t(\mathcal{G})\le k^2$.
\end{proof}

\section{A Link to Static Graphs}\label{sec:static}
In this section, we show a direct link between temporal coloring and classical coloring. We show how to transform a temporal graph into a static one where both graphs have the same (temporal) chromatic number.

\subsection{2-colorable temporal graphs}

We first start with the case of 2-colorability. In that situation, we provide a property that lets us know when a vertex can change its color. This allows us to create a graph such as, it is bipartite if and only if the initial temporal graph can be temporally 2-colored.

\begin{theorem}\label{thm:2_colorable}
Deciding if a temporal graph is temporally 2-colorable can be done in Polynomial time. 
\end{theorem}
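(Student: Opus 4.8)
\textbf{The plan} is to reduce temporal $2$-colorability to testing bipartiteness of an auxiliary static graph. The engine of the reduction is a rigidity property special to two colors: a vertex is essentially forbidden from changing color while it has any incident edge in the current or next snapshot.

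First I would prove the following \emph{color-change lemma}: in any temporal $2$-coloring $(c_1,\dots,c_T)$, if $v$ is non-isolated in $G_i\cup G_{i+1}$ then $c_i(v)=c_{i+1}(v)$. Suppose not, so with two colors $c_{i+1}(v)=\overline{c_i(v)}$, and pick an incident edge $uv\in E_i\cup E_{i+1}$. If $uv\in E_i$, properness of $c_i$ on $G_i$ forces $c_i(u)=\overline{c_i(v)}=c_{i+1}(v)$, contradicting compatibility. If $uv\in E_{i+1}$, properness of $c_{i+1}$ on $G_{i+1}$ forces $c_{i+1}(u)=c_i(v)$, again contradicting compatibility (in its $c_i(v)\neq c_{i+1}(u)$ form).

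Second, I would show that this necessary condition is also sufficient, thereby decoupling the two requirements in the definition of a temporal coloring. Concretely: a sequence $(c_1,\dots,c_T)$ is a temporal $2$-coloring if and only if (a) each $c_i$ is proper on the snapshot $G_i$, and (b) $c_i(v)=c_{i+1}(v)$ whenever $v$ is non-isolated in $G_i\cup G_{i+1}$. The forward direction is immediate from the lemma together with properness on the $3$-smashed graphs. For the converse, I first upgrade (a) to properness on $S_i$: if $uv\in E_{i-1}$ then (b) propagates $c_{i-1}$ to $c_i$ on both endpoints, so $c_i(u)\neq c_i(v)$ follows from properness of $c_{i-1}$ on $G_{i-1}$ (symmetrically for $E_{i+1}$, and likewise for the boundary colorings $c_1,c_T$). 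Compatibility is obtained the same way: for $uv\in E_i\cup E_{i+1}$ both endpoints are locked across the transition by (b), so the inequality $c_i(u)\neq c_{i+1}(v)$ reduces to properness within a single snapshot.

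Finally, conditions (a) and (b) form a system of equality and disequality constraints over the variables $\{c_i(v)\}$, which I would encode as a static graph $H$. Its vertices are the pairs $(v,i)$; I contract $(v,i)$ with $(v,i+1)$ whenever $v$ is non-isolated in $G_i\cup G_{i+1}$ (condition (b)), and for every edge $uv\in E_i$ I add an edge between the classes of $(u,i)$ and $(v,i)$ (condition (a)). A proper $2$-coloring of $H$ is exactly an assignment satisfying (a) and (b); hence $\mathcal{G}$ is temporally $2$-colorable if and only if $H$ is bipartite (a contracted disequality edge that becomes a self-loop being read as an odd cycle). Since $H$ has $O(nT)$ vertices and $O(\sum_i|E_i|)$ edges, testing bipartiteness takes linear time in the size of $H$, hence polynomial time in the input. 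The \textbf{main obstacle} is the two-color rigidity: establishing the lemma and, above all, proving the converse that (a) and (b) already force compatibility, which is what lets the multi-step compatibility constraints collapse into a single static bipartiteness test. The remaining care is bookkeeping, namely handling the boundary colorings $c_1$ and $c_T$ and modeling the equality constraints (which bipartiteness does not express directly) through contraction while watching for self-loops.
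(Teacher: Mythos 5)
Your proposal is correct and follows essentially the same route as the paper: the same two-color rigidity lemma (a vertex can only change color across a transition if it is isolated in both adjacent snapshots), followed by the construction of a static quotient graph whose bipartiteness is equivalent to temporal 2-colorability — your contracted graph $H$ is exactly the paper's $col(\mathcal{G})$, whose vertices are the maximal time intervals during which each vertex's color is locked. The only cosmetic difference is that you reach this graph by contraction rather than by defining the interval-vertices directly (and your worry about self-loops is moot, since only copies of the same vertex are ever contracted).
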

\begin{proof}
Let $\mathcal{G} = (G_1, \ldots, G_T)$ be a temporal graph on vertex set $V$. We start this proof by considering the following result.

\begin{claim}\label{claim:isolated} 
If there exists a temporal 2-coloring $c=(c_1,\dots, c_T)$ of $G$ such that, for $j \in \{2,\ldots,T\}$ and $u \in V$,  $c_j(u) \neq c_{j-1}(u)$, then $u$ has no neighbors in both $G_j$ and $G_{j-1}$. 
\end{claim}
\begin{proofclaim} By the definition of compatibility, if $u$ had a neighbor $v$ in either $ G_{j-1}$ or $G_j$, then $c_{j-1}(u) \neq c_{j}(v)$ and $c_j(u)\neq c_j(v)$. However, since only two colors are allowed, this would violate the coloring rules. Therefore, $u$ must be isolated in both graphs.\end{proofclaim}

From Claim~\ref{claim:isolated}, in order, for a vertex to change color, we must consider times where this vertex is isolated. 
For all vertex $v\in V$, let $(t^v_{1}, \dots, t^v_k)$ be the set of times $t = t^v_i\in [T]$ such that $v$ is isolated in both $G_{t}$ and $G_{t-1}$. We order indices such that $t^v_{i}< t^v_j$ for every $i<j$.  From Claim~\ref{claim:isolated}, it follows that for any temporal 2-coloring of $\mathcal{G}$, the vertex $v$ retains the same color throughout the interval $\{t^v_{i-1}, \ldots,t^v_i - 1\}$ for every $i \in \{0,1,\ldots,k\}$, where $t^v_0 = 1$. 
Recall that, given two integer $i$ and $j$, $[i,j)=\{i,i+1,\ldots,j-1\}$. 

We construct a static graph $col(\mathcal{G})$ as follows: 
\begin{itemize}
    \item $V(col(\mathcal{G}))=\{(v,t^v_i) \mid v\in V\mbox{ and }t^v_i \in \{t^v_0, \ldots, t^v_k\}\}$
    \item  $(u,t^u_i)(v,t^v_j)\in E(col(\mathcal{G}))$ if and only if there exists a time $t \in [t^u_{i}, t^u_{i+1}) \cap [t^v_{j}, t^v_{j+1})$ such that $uv \in E(G_t)$.
\end{itemize}

\begin{figure}
    \centering
    \centering
\begin{tikzpicture}[scale=0.6]

\begin{scope}[shift={(0,0)}]
\draw (-1.5,0) node[] ()   [] {$s$};
\draw (-1.5,1) node[] ()   [] {$w$};
\draw (-1.5,2) node[] ()   [] {$v$};
\draw (-1.5,3) node[] ()   [] {$u$};

\draw (0,0) node[circle,fill, color=blue] (a0)   [] {};
\draw (0,1) node[circle,fill, color=blue] (a1)   [] {};
\draw (0,2) node[circle,fill, color=blue] (a2)   [] {};
\draw (0,3) node[circle,fill, color=blue] (a3)   [] {};
\draw (0,-1) node[] ()   [] {$G_1$};

\draw (2,0) node[circle, draw=orange] (b0)   [] {};
\draw (2,1) node[circle, draw=orange] (b1)   [] {};
\draw (2,2) node[circle, draw=orange] (b2)   [] {};
\draw (2,3) node[circle, draw=orange] (b3)   [] {};
\draw (2,-1) node[] ()   [] {$G_2$};

\draw (4,0) node[circle, draw=teal] (c0)   [] {};
\draw (4,1) node[circle, draw=teal] (c1)   [] {};
\draw (4,2) node[circle, draw=teal] (c2)   [] {};
\draw (4,3) node[circle, draw=teal, fill=teal] (c3)   [] {};
\draw (4,-1) node[] ()   [] {$G_3$};

\draw (6,0) node[circle, draw=red] (d0)   [] {};
\draw (6,1) node[circle, draw=red] (d1)   [] {};
\draw (6,2) node[circle, draw=red, fill=red] (d2)   [] {};
\draw (6,3) node[circle, draw=red] (d3)   [] {};
\draw (6,-1) node[] ()   [] {$G_4$};

\draw (8,0) node[circle, draw=ForestGreen] (e0)   [] {};
\draw (8,1) node[circle, draw=ForestGreen, fill=ForestGreen] (e1)   [] {};
\draw (8,2) node[circle, draw=ForestGreen] (e2)   [] {};
\draw (8,3) node[circle, draw=ForestGreen] (e3)   [] {};
\draw (8,-1) node[] ()   [] {$G_5$};

\draw (10,0) node[circle, draw=black] (f0)   [] {};
\draw (10,1) node[circle, draw=black] (f1)   [] {};
\draw (10,2) node[circle, draw=black] (f2)   [] {};
\draw (10,3) node[circle, draw=black, , fill=black] (f3)   [] {};
\draw (10,-1) node[] ()   [] {$G_6$};

\draw[blue] (a3) edge [out=200,in=160] (a1);
\draw[blue] (a3) edge [out=200,in=160] (a2);
\draw[orange] (b1)--(b0);
\draw[orange] (b2) edge [out=200,in=160] (b0);
\draw[teal] (c1) -- (c0);
\draw[red] (d3) edge [out=220,in=140] (d0);
\draw[ForestGreen] (e2) edge [out=220,in=140] (e0);
\draw[black] (f2) -- (f1);

\draw[fill, opacity =0.1] (-0.5,2.7) rectangle (2.5,3.3);
\draw[fill, opacity =0.1] (3.5,2.7) rectangle (8.5,3.3);
\draw[fill, opacity =0.1] (9.5,2.7) rectangle (10.5,3.3);

\draw[fill, opacity =0.1] (-0.5,1.7) rectangle (4.5,2.3);
\draw[fill, opacity =0.1] (5.5,1.7) rectangle (10.5,2.3);

\draw[fill, opacity =0.1] (-0.5,0.7) rectangle (6.5,1.3);
\draw[fill, opacity =0.1] (7.5,0.7) rectangle (10.5,1.3);

\draw[fill, opacity =0.1] (-0.5,-0.3) rectangle (10.5,0.3);

\draw (4.4,-2) node[] ()   [] {$\cal G$};
\end{scope}

\begin{scope}[shift={(15,0)},scale=1.3]
\draw (0,3) node[rectangle, draw,fill,color= blue] (a0)   [] {};
\node at ([shift={(180:0.8)}]a0) {$(u,1)$};
\draw (2,3) node[rectangle, draw,color=teal, fill] (b0)   [] {};
\node at ([shift={(0:0.8)}]b0) {$(u,3)$};
\draw (4,3) node[rectangle, draw, color=black,fill] (c0)   [] {};
\node at ([shift={(0:0.8)}]c0) {$(u,6)$};

\draw (0,2) node[rectangle, fill,draw,color=blue] (a1)   [] {};
\node at ([shift={(180:0.8)}]a1) {$(v,1)$};
\draw (2,2) node[rectangle, draw, fill, color=red] (b1)   [] {};
\node at ([shift={(0:0.8)}]b1) {$(v,4)$};

\draw (0,1) node[rectangle,fill, draw, color=blue] (a2)   [] {};
\node at ([shift={(180:0.8)}]a2) {$(w,1)$};
\draw (2,1) node[rectangle, draw, color=ForestGreen, fill] (b2)   [] {};  
\node at ([shift={(0:0.8)}]b2) {$(w,5)$};

\draw (0,0) node[rectangle, draw,fill, color=blue] (a3)   [] {};
\node at ([shift={(180:0.8)}]a3) {$(s,1)$};

\draw[blue] (a0) -- (a1);
\draw[blue] (a0) edge [out=-20,in=20] (a2);
\draw[orange] (a2) -- (a3);
\draw[orange] (a1) edge [out=-20,in=20] (a3);

\draw[red] (b0) edge [out=180,in=0] (a3);
\draw[ForestGreen] (b1) edge [out=200,in=-10] (a3);
\draw[black] (b1) -- (b2);

\draw (2,-1.5) node[] ()   [] {$col(\mathcal{G})$};

\end{scope}

\end{tikzpicture}
    \caption{Construction for the proof of Theorem~\ref{thm:2_colorable} }
    \label{fig:twoColarable}
\end{figure}
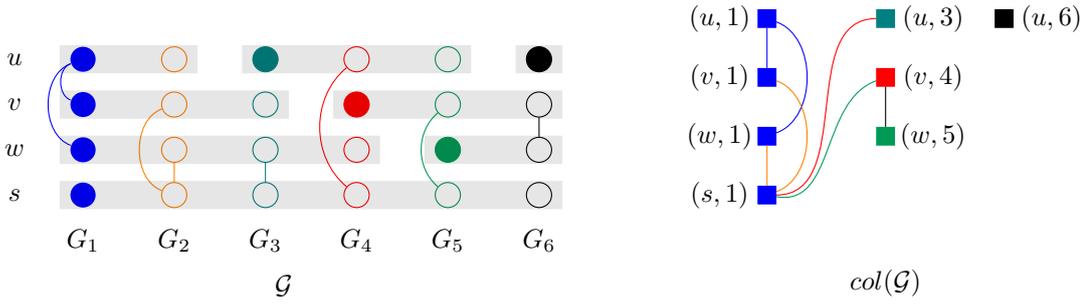

An illustration for the construction of graph $col(\mathcal{G})$ can be found in Figure~\ref{fig:twoColarable}. On the left side of Figure~\ref{fig:twoColarable}, there is a temporal graph $\cal G$ with $6$ snapshots on four vertices $u,v,w,s$. A vertex $v$ at time $t$  is filled if it is isolated at time $t$ and time $(t-1)$ or if $t=1$. The grey bags represent a period of time when a vertex cannot change color in a temporal 2-coloring of $\cal G$. On the right hand side of Figure~\ref{fig:twoColarable}, there is the graph $col(\mathcal{G})$ . The vertex set of $col(\mathcal{G})$ is $\{(u,1),(u,3),(u,6),(v,1),(v,4),(w,1),(w,5),(s,1)\}$ corresponding to the set of grey bags of $\mathcal{G}$. Intuitively, we connect two vertices of $col(\mathcal{G})$ if and only if there is an edge connecting two corresponding bags in $\mathcal{G}$.

\begin{claim}\label{claim:equivalent}
 $\chi^t(\mathcal{G}) \leq 2$ if and only if $\chi(col(\mathcal{G})) \leq 2$.
\end{claim}
\begin{proofclaim}
Suppose that  $\chi^t(\mathcal{G}) \leq 2$ and let $(c_1,\ldots,c_T)$ be a temporal 2-coloring of $\mathcal{G}$. Let $c^*$ be a 2-coloring of $col(\mathcal{G})$ such that $c^*(v, t^v_i) = c_{t^v_i}(v)$ for every vertex $(v,t^v_i)\in V(col(\mathcal{G}))$. We prove that $c^*$ is a proper coloring of $col(\mathcal{G})$. 

Consider an edge $(u,t^u_{i})(v,t^v_{j})$ in $col(\mathcal{G})$. By definition,  there exists a time $t \in [t^u_{i}, t^u_{i+1}) \cap [t^v_{j}, t^v_{j+1})$ such that $uv \in E(G_t)$. Therefore, $c_t(u) \neq c_t(v)$. Furthermore, from Claim~\ref{claim:isolated}, we have that $c_{t^u_i}(u) = c_t(u)$ and $c_{t^v_j}(v) = c_t(v)$. Hence $c^*(u,t^u_{i}) = c_t(v)$, $c^*(v,t^v_{j}) = c_t(v)$, so $c^*(u,t^u_{i}) \neq c^*(v,t^v_{j})$. Therefore $c^*$ is a proper coloring of $col(\mathcal{G})$. This shows that $\chi(col(\mathcal{G})) \leq 2$. 

\vspace{1.5ex}

Conversely, suppose that $\chi(col(\mathcal{G})) \leq 2$ and let $c^*$ be a 2-coloring of $col(\mathcal{G})$. 
For all $t \in [T]$ and all $v \in V$, define $c_t(v) = c^*(v,t^v_i)$, where $t^v_i \leq t$ is the largest integer such that $(v,t^i_v)$ is a vertex in $col(\mathcal{G})$. Observe that, $t=t^v_i$ if and only if $v$ is isolated in $G_{t-1}$ and in $G_t$. We now show that $(c_1,\ldots,c_T)$ is a temporal 2-coloring of $\mathcal{G}$.

Let $uv$ be an edge in $E(G_{t-1} \cup G_t \cup G_{t+1})$ and let $t^u_i,t^v_j \leq t$ be the largest integers such that $(u,t^u_i)$ and $(v,t^v_j)$ are vertices in $col(\mathcal{G})$.
Set ${\cal I}=[t^u_{i}, t^u_{i+1}) \cap [t^v_{j}, t^v_{j+1})$.  We show that there exists $t'\in {\cal I}$ such that $uv\in E(G_{t'})$.

\begin{itemize}
    \item If $uv\in E(G_{t-1})$ then nor $u$ or $v$ are isolated at time $t-1$, hence $t> t^u_i$ and $t>t^v_j$. Therefore $t-1\in {\cal I}$.
    \item If $uv\in E(G_{t})$ then, by the choice of $t^u_i$ and $t^v_j$, $t\in {\cal I}$.
    \item If $uv\in E(G_{t+1})$ then, $t+1\neq t^u_i +1$ and $t+1\neq t^v_j +1$. Therefore $t+1< t^u_{i +1}$ and  $t+1<t^v_{j +1}$. Hence $t+1\in {\cal I}$. 
\end{itemize}

In all cases there exist $t'\in {\cal I}$ such that $uv\in E(G_{t'})$. Therefore $(u,t^u_{i})(v,t^v_{j})\in E(col(\mathcal{G}))$. Hence $c^*(u,t_i^{u}) \neq c^*(v,t_j^{v})$ and so $c_t(u)\neq c_t(v)$.

We still have to consider compatibility. 
Let $u$ be a vertex in $V$ such that, for some $t$, $c_{t-1}(u)\neq c_{t}(u)$. By construction, it exists $t^u_i$ such that $c^*(u,t^u_{i}-1) \neq c^*(u,t^u_{i})$ and $t=t_i^u$. Hence, by definition, $u$ is isolated in $G_{i-1}\cup G_i$. In other words, if $u$ change color between $t-1$ and $t$ then $u$ is isolated in $G_{i-1}\cup G_i$. Therefore, if $uv \in E(G_{t-1} \cup G_t)$, then $c_{t-1}(u)= c_t(v)$ and $c_{t-1}(v) \neq c_t(u)$. 
Suppose $uv \in E(G_{t-1} \cup G_t)$. Recall that, $c_t(v) = c^*(v,t^v_i)$, where $t^v_i \leq t$ is the largest integer such that $(v,t^v_i)$ is a vertex in $col(\mathcal{G})$ and so $v$ is isolated in $G_{t^v_i}$. It implies that $t^v_i\leq t-1$, and $c_{t-1}(v) = c_t(v) = c^*(v,t^v_i)$. Similarly, $c_{t-1}(w) = c_t(w)$. Since either $c_{t-1}(u) \neq c_{t-1}(w)$ or $c_{t}(u) \neq c_{t}(w)$, we have $c_{t-1}(u) \neq c_t(v)$ and $c_{t-1}(v) \neq c_t(u)$. This completes the proof that $(c_1,\ldots,c_T)$ is a proper 2-coloring of $\mathcal{G}$.\end{proofclaim}

To conclude the proof of the theorem, we just have to note that, to decide if $\cal G$ is temporally 2-colorable we can build $col(\mathcal{G})$ that is a graph with at most $|V|\times |T|$ vertices. Then, by claim~\ref{claim:equivalent},  $col(\mathcal{G})$ is bipartite if and only if $\cal G$ is temporally 2-colorable. Since deciding if a static graph is bipartite can be done in Polynomial time, this prove the Theorem.
\end{proof}

\subsection{From Temporal Coloring to Classic Coloring}

The problem of finding a $k$-coloring for a temporal graph $\mathcal{G}$ can be rewritten equivalently to a classical coloring problem for a static graph $static(\mathcal{G})$, which is constructed as follows: 
\begin{itemize}
    \item $V_{static} =  \{ v_i \mid v \in V, i \in [T]\} $;
    
    \item $E_{static} = E' \cup E''$ with:
    $$E' = \bigcup_{i=1}^{T} \{(u_i,v_i) \mid (u_i,v_i) \in E( G_{i-1} \cup G_i \cup G_{i+1}) \}$$
    $$E'' = \bigcup_{i=1}^{T-1} \{(u_i,v_{i+1}) \mid (u_i,v_i) \in E(G_i \cup G_{i+1}) \} $$
    
\end{itemize}

Intuitively, we duplicate each node for each snapshot. The color of $v$ in snapshot $G_i$ will be the same as the color of $v^i$ in $static(\mathcal{G})$. The edges in $E'$ ensure that the color of $v^i$ is proper with its neighbors in $G_{i-1}\cup G_i\cup G_{i+1}$. The edges in $E''$ are equivalent to the compatibility condition.

\begin{theorem}\label{th:static}
    For any temporal graph $\mathcal{G}$, we have $\chi^t(\mathcal{G})=\chi(static(\mathcal{G}))$.
\end{theorem}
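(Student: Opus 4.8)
The plan is to establish the equality $\chi^t(\mathcal{G})=\chi(static(\mathcal{G}))$ by exhibiting a colour-preserving bijection between temporal $k$-colorings of $\mathcal{G}$ and proper $k$-colorings of $static(\mathcal{G})$, for every $k$. Since both quantities are minima of $k$ over the respective sets of valid colorings, showing that a temporal $k$-coloring exists if and only if a proper $k$-coloring of $static(\mathcal{G})$ exists immediately yields equality of the two numbers. I would prove the two implications separately, using throughout the obvious correspondence: a sequence $(c_1,\dots,c_T)$ of maps $V\to[k]$ corresponds to a single map $c^\star:V_{static}\to[k]$ via $c^\star(v_i)=c_i(v)$, and vice versa. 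The whole content of the theorem is that the defining constraints on the two sides match up exactly under this correspondence.

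\textbf{Forward direction.} First I would assume $(c_1,\dots,c_T)$ is a temporal $k$-coloring and define $c^\star(v_i)=c_i(v)$; I must check $c^\star$ is proper on $static(\mathcal{G})$, i.e.\ that no edge of $E_{static}=E'\cup E''$ is monochromatic. For an edge $(u_i,v_i)\in E'$, by construction $uv\in E(G_{i-1}\cup G_i\cup G_{i+1})$, so item~1 of the definition of a temporal $k$-coloring (properness of $c_i$ on $S_i(\mathcal{G})=G_{i-1}\cup G_i\cup G_{i+1}$) gives $c_i(u)\ne c_i(v)$, hence $c^\star(u_i)\ne c^\star(v_i)$; the boundary cases $i=1,T$ use the special clauses of item~1 for $c_1$ and $c_T$. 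For an edge $(u_i,v_{i+1})\in E''$, we have $uv\in E(G_i\cup G_{i+1})$, so the compatibility clause (item~2) $c_i(u)\ne c_{i+1}(v)$ gives exactly $c^\star(u_i)\ne c^\star(v_{i+1})$.

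\textbf{Converse direction.} Conversely, given a proper $k$-coloring $c^\star$ of $static(\mathcal{G})$, I would set $c_i(v):=c^\star(v_i)$ and verify items~1 and~2. Properness of $c_i$ on $G_{i-1}\cup G_i\cup G_{i+1}$ follows because every such edge $uv$ gives $(u_i,v_i)\in E'$, whence $c^\star(u_i)\ne c^\star(v_i)$; compatibility of $c_i$ and $c_{i+1}$ on $G_i\cup G_{i+1}$ follows because every such edge $uv$ gives $(u_i,v_{i+1})\in E''$ (and, symmetrically, $(v_i,u_{i+1})\in E''$, which handles the other direction $c_i(v)\ne c_{i+1}(u)$ required by the symmetric reading of compatibility), whence $c^\star(u_i)\ne c^\star(v_{i+1})$. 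Chaining the two directions, a $k$-coloring exists on one side iff it exists on the other, so the minima coincide.

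\textbf{The main obstacle} is not any deep argument but bookkeeping precision: one must make sure the definition of $E'$ and $E''$ is read with the correct symmetry (the compatibility condition $\forall uv\in E_i\cup E_{i+1},\ c_i(u)\ne c_{i+1}(v)$ is symmetric in $u,v$, so $E''$ must contain both $(u_i,v_{i+1})$ and $(v_i,u_{i+1})$ — I would confirm the construction's $\{(u_i,v_i)\in E(G_i\cup G_{i+1})\}$ is meant as an unordered-edge quantifier generating both oriented pairs, otherwise the edge $v_i$--$u_{i+1}$ could be missed) and that the boundary snapshots $G_1,G_T$, where $S_i$ degenerates to a $2$-smashed graph, are matched by the corresponding special clauses in item~1. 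Handling these edge cases cleanly is the only place where care is needed; the core equivalence is a direct translation of constraints.
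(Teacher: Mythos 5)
Your proposal is correct and follows essentially the same route as the paper's own proof: both directions translate the coloring via $c^\star(v_i)=c_i(v)$ and match item~1 of the temporal-coloring definition to $E'$ and the compatibility condition to $E''$. Your extra remark that $E''$ must be read as generating both oriented pairs $(u_i,v_{i+1})$ and $(v_i,u_{i+1})$ is a sensible clarification of a point the paper leaves implicit, but it does not change the argument.
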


\begin{proof}
If there exists a $k$-coloring $(c_1,\ldots,c_T)$ for a temporal graph $\mathcal{G}$, then we can find a proper $k$-coloring, $C$, for the corresponding ${G}_{static(\mathcal{G})}$ by setting $C(v^i) = c_i(v)$. By the compatibility property of $(c_1,\ldots,c_T)$, the configuration $C$ creates no conflict on the edge set $E''$. Moreover an edge $u^iv^i\in E'$ corresponds to some edge $uv$ in $G_{i-1}\cup G_i\cup G_{i+1}$, ensuring that this edge is properly colored.

Conversely, if there exists a $k$-coloring $C$ for $static(\mathcal{G})$, then, we can choose $c_i(v) = C(v^i)$, for all $v \in V$, $i \in [T]$. The coloring
$(c_1,\ldots,c_T)$ is a $k$-coloring for the original temporal graph for similar reasons.
\end{proof}

Unfortunately, analyzing directly those graphs does not, yet, give results for the specific cases we consider in the following sections.

\section{Arbitrary Grow Pace}\label{sec:arbitrary}

In this section, we restrict all snapshots to some classical  classes of graphs. We choose classes where the coloring questions are well known. The classes we consider are trees, paths, graphs with bounded degeneracy and graphs with bounded degree. For all those cases, we provide better bounds for the temporal chromatic number, in particular through the analysis of the smashing of 3 snapshots.

We start with a quick analysis for the case where each snapshot is a bipartite graph.

\subsection{Bipartite Graphs}\label{sec:bipartite}

Given a temporal graph where each snapshot is bipartite, we know that the maximum chromatic number for each snapshot is 2. Hence, from Theorem~\ref{th:cubeub}, we directly have that the temporal chromatic number is at most 8. Theorem~\ref{t:bip_8} show that this bound is tight. 

\begin{theorem}\label{t:bip_8}
    The maximal chromatic number of temporal graphs where each snapshot is a bipartite graph is 8.
\end{theorem}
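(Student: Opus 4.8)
The plan is to prove the statement in two halves. The upper bound $\chi^t(\mathcal{G}) \le 8$ for every temporal graph with bipartite snapshots is already available: each snapshot satisfies $\chi(G_i) \le 2$, so Theorem~\ref{th:cubeub} with $k = 2$ gives $\chi^t(\mathcal{G}) \le 2^3 = 8$. The real content is the matching lower bound, namely exhibiting one temporal graph in the class whose temporal chromatic number is exactly $8$, which shows the maximum over the class is attained.

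For the lower bound I would exploit the fact that item~1 of the definition of a temporal coloring forces $c_i$ to be a proper coloring of the $3$-smashed graph $S_i(\mathcal{G}) = G_{i-1} \cup G_i \cup G_{i+1}$. Hence, if I can arrange three consecutive bipartite snapshots whose union is the complete graph $K_8$, then $c_2$ must properly color $K_8$, which needs $8$ colors, so $\chi^t(\mathcal{G}) \ge \chi(K_8) = 8$. The task therefore reduces to writing $K_8$ as the union of three bipartite graphs.

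The key step is this decomposition. I would label the eight vertices by the distinct binary strings of length $3$. For each coordinate $\ell \in \{1,2,3\}$, let $B_\ell$ be the graph on the same vertex set containing exactly those edges $uv$ whose endpoints first differ in coordinate $\ell$. Each $B_\ell$ is bipartite, its two sides being the vertices having a $0$ and a $1$ in coordinate $\ell$ (every edge of $B_\ell$ crosses between them); and since any two distinct strings differ in some coordinate, every edge of $K_8$ lies in exactly one $B_\ell$. Thus $K_8 = B_1 \cup B_2 \cup B_3$. Setting $G_1 = B_1$, $G_2 = B_2$, $G_3 = B_3$ yields a temporal graph with bipartite snapshots and $S_2(\mathcal{G}) = K_8$, giving $\chi^t(\mathcal{G}) \ge 8$ and hence, with the upper bound, equality.

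I expect the only real subtlety — and the step worth stating carefully — to be the bipartite decomposition of $K_8$, which is the $k=3$ instance of the classical equivalence that $\chi(H) \le 2^k$ holds if and only if the edges of $H$ can be covered by $k$ bipartite graphs. Everything else is bookkeeping: verifying that each $B_\ell$ is bipartite, that their union is all of $K_8$, and that the forced properness of $c_2$ on $K_8$ delivers the bound. Combining the two halves shows the maximal chromatic number of the class is $8$.
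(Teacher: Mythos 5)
Your proposal is correct and matches the paper's proof: the upper bound comes from Theorem~\ref{th:cubeub} with $k=2$, and the lower bound from three bipartite snapshots on $8$ vertices whose $3$-smashed graph is $K_8$ (the paper exhibits this decomposition explicitly in Figure~\ref{f:bip8}, which is precisely your ``first differing coordinate'' construction with $16+8+4$ edges). Your binary-string description is just a more systematic way of writing down the same example.
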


\begin{proof}
    Theorem~\ref{th:cubeub} for bipartite graphs implies that any temporal graph where each snapshot is bipartite can be temporally 8-colored. To conclude the proof, we need to provide a temporal graph with temporal chromatic number 8.
    
    In Figure~\ref{f:bip8}, there is a temporal graph on 8 vertices with 3 snapshots such that all snapshots are bipartite graphs (bi-partition given by dark and white vertices).  One can see that $S_2({\cal G})$ is a complete graph. Hence $\chi ( S_2({\cal G}))\ge8$ and so $\chi^t({\cal G})=8$.
\end{proof}


\begin{figure}
      \begin{tikzpicture}
  			
    \begin{scope}[scale=0.85]
			\node[vertex,fill] (a) at (0,0) {};
			\node[vertex] (b) at (1.7,0) {};
			\node[vertex,fill] (c) at (2.55,1) {};
			\node[vertex] (d) at (2.55,2.2) {};
			\node[vertex,fill] (e) at (1.7,3.2) {};
			\node[vertex] (f) at (0,3.2) {};
			\node[vertex,fill] (g) at (-0.8,2.2) {};
			\node[vertex] (h) at (-0.8,1) {};
		
		    \node[] (name) at (0.6,-1) {$G_1$};

			\draw[color=black] (a) -- (b);
			\draw[color=black] (a) -- (d);
			\draw[color=black] (a) -- (f);
			\draw[color=black] (a) -- (h);
			\draw[color=black] (b) -- (c);
			\draw[color=black] (b) -- (e);
			\draw[color=black] (b) -- (g);
			\draw[color=black] (c) -- (d);
			\draw[color=black] (c) -- (f); 
			\draw[color=black] (c) -- (h); 
			\draw[color=black] (d) -- (e);
			\draw[color=black] (d) -- (g);
			\draw[color=black] (e) -- (f);
			\draw[color=black] (e) -- (h);
			\draw[color=black] (f) -- (g); 
			\draw[color=black] (h) -- (g); 
	\end{scope}

 \begin{scope}[xshift=5cm,yshift=0cm,scale=0.85]
 
 	\node[vertex,fill] (a) at (0,0) {};
			\node[vertex,fill] (b) at (1.7,0) {};
			\node[vertex] (c) at (2.55,1) {};
			\node[vertex] (d) at (2.55,2.2) {};
			\node[vertex,fill] (e) at (1.7,3.2) {};
			\node[vertex,fill] (f) at (0,3.2) {};
			\node[vertex] (g) at (-0.8,2.2) {};
			\node[vertex] (h) at (-0.8,1) {};
		
              \node[-] (name) at (0.6,-1) {$G_2$};
              
			\draw[color=black] (a) -- (c);
			\draw[color=black] (a) -- (g);
			\draw[color=black] (b) -- (d);
			\draw[color=black] (b) -- (h);
		
			\draw[color=black] (c) -- (e);
 
			\draw[color=black] (d) -- (f);

			\draw[color=black] (e) -- (g);
			\draw[color=black] (f) -- (h); 

	\end{scope}

   \begin{scope}[xshift=10cm,yshift=0cm,scale=0.9]
 
 	\node[vertex,fill] (a) at (0,0) {};
			\node[vertex,fill] (b) at (1.7,0) {};
			\node[vertex,fill] (c) at (2.55,1) {};
			\node[vertex,fill] (d) at (2.55,2.2) {};
			\node[vertex] (e) at (1.7,3.2) {};
			\node[vertex] (f) at (0,3.2) {};
			\node[vertex] (g) at (-0.8,2.2) {};
			\node[vertex] (h) at (-0.8,1) {};
		
              \node[] (name) at (0.6,-1) {$G_3$};
              
			\draw[color=black] (a) -- (e);
			\draw[color=black] (b) -- (f);
			\draw[color=black] (c) -- (g);
			\draw[color=black] (d) -- (h);

	\end{scope}

		\end{tikzpicture}
    \caption{A temporal graph with all snapshots being bipartite and temporal chromatic number 8.}
    \label{f:bip8}
\end{figure}
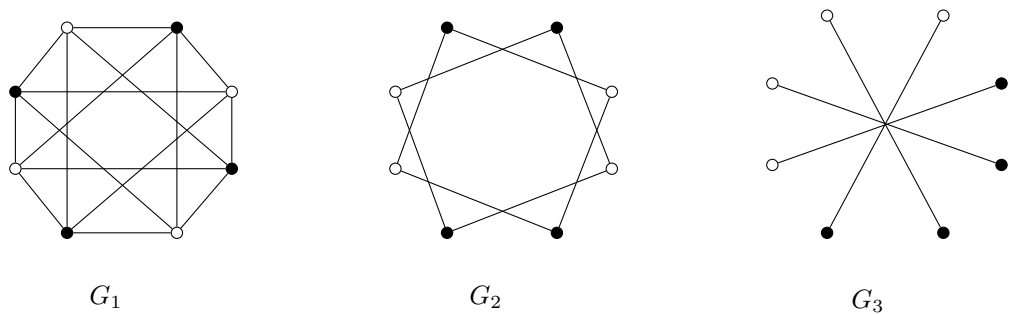

\subsection{Trees and Paths}\label{trees}

Let ${\cal G}=(G_1,\ldots,G_T)$ be a temporal graph whose snapshots are trees. We start by the case where we duplicate each snapshot, and get tight bounds. Theorem~\ref{th:copy} shows that there is a temporal 4-coloring for a such temporal graph. Moreover, there is a way to decompose a complete graph on 4 vertices into two paths, see Figure~\ref{fig:tree4}. It implies that there is a temporal graph, where we duplicate each snapshot, and the chromatic number of this temporal graph is $4$.

For the general case, where we do not duplicate each tree, we can just define bound the maximal chromatic number that are possibly not tight.

\begin{lemma}\label{lem:tree_lb}
Let $S_i$ be the three-smashed graph of any three arbitrary trees $G_{i-1}$, $G_i$ and $G_{i+1}$. We have $\chi(S_i)\le6$. Moreover, this bound is tight.
\end{lemma}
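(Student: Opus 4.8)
The plan is to prove the two directions separately: the upper bound $\chi(S_i)\le 6$ via a degeneracy argument, and the tightness by exhibiting three trees whose union is $K_6$.

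For the upper bound, I would use that each of $G_{i-1}$, $G_i$, $G_{i+1}$ is a forest. Restricting to any subset $W\subseteq V$ of size $s$, each tree induces at most $s-1$ edges on $W$, so the induced subgraph of $S_i$ on $W$ has at most $3(s-1)=3s-3$ edges. Hence its average degree is strictly less than $6$, which forces the existence of a vertex of degree at most $5$. Since this holds for every induced subgraph, $S_i$ is $5$-degenerate, and as recalled in Section~\ref{sec:def} a $d$-degenerate graph is $(d+1)$-colorable, giving $\chi(S_i)\le 6$.

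For the tightness, I would look for three trees on a common vertex set whose union is a $6$-chromatic graph. The natural target is $K_6$, since $\chi(K_6)=6$ and $K_6$ has exactly $15=3\times 5$ edges, matching three spanning trees on $6$ vertices with no slack. It then remains to decompose $K_6$ into three spanning trees; in fact it decomposes into three Hamiltonian paths, which are trees. Labelling the vertices $a,b,c,d,e,f$, the three paths
$$a-b-c-d-e-f, \qquad b-d-f-a-c-e, \qquad c-f-b-e-a-d$$
partition $E(K_6)$. Taking $G_{i-1},G_i,G_{i+1}$ to be these three (spanning) paths yields $S_i=K_6$, so $\chi(S_i)=6$; this even shows the extremal example can be realized with paths rather than arbitrary trees.

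The upper bound is routine once the forest edge-count is observed, so the only real content is the tightness construction: one must find trees whose union is $6$-chromatic. The clean solution rests on the classical fact (Walecki's construction) that $K_{2n}$ decomposes into $n$ Hamiltonian paths; equivalently, since the arboricity of $K_6$ equals $\lceil 15/5\rceil=3$, Nash--Williams gives a decomposition into three forests which, by edge-count, are necessarily spanning trees. Verifying the explicit decomposition above only requires checking that each of the $15$ edges of $K_6$ appears in exactly one of the three paths, a finite check that I expect to present either directly or via a figure analogous to Figure~\ref{fig:tree4}.
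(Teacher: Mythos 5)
Your proposal is correct and matches the paper's proof essentially step for step: the upper bound via the same forest edge-count giving $5$-degeneracy of $S_i$, and tightness by decomposing $K_6$ into three Hamiltonian paths (the paper exhibits such a decomposition in its Figure~\ref{smtree}, where your explicit decomposition checks out as a valid partition of the $15$ edges). No substantive difference from the paper's argument.
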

\begin{proof}
Let $H = (V_H,E_H)$ be any arbitrary induced subgraph of $S_i$. Since the induced subgraphs on the vertex set $V_H$ of $G_{i-1}$, $G_i$, and $G_{i+1}$ are trees, 
it holds that $|E_H| \leq 3(|V_H|-1)$. The sum of vertex degrees in $H$, being equal to twice the number of edges, is strictly less than $6|V_H|$. Therefore, $H$ must have a vertex of degree at most 5 for every induced subgraph $H$ of $S_i$. It implies that $S_i$ is a 5-degenerate graph. Hence, $G$ is 6-colorable.

Furthermore, the bound is tight. Figure~\ref{smtree} shows an example such that $S_2=G_1\cup G_2\cup G_3$ is a clique of size 6 and $G_1$, $G_2$ and $G_3$ are paths. 
\end{proof}

\begin{figure}[ht]
\centering
\begin{minipage}{0.44\textwidth}
        \centering
    \begin{tikzpicture}[scale=0.7]
\tikzstyle{whitenode}=[draw,circle,minimum size=12pt,inner sep=0pt]
\tikzstyle{nonode}=[draw=white,circle,minimum size=12pt,inner sep=0pt]

    \draw (0,3.2) node[nonode] (nn1)   [] {};
    \draw (0,-1.2) node[nonode] (nn2)   [] {};
    
    \draw (0,0) node[whitenode] (a)   [] {};
    \draw (0,2) node[whitenode] (b)   [] {};
    \draw (2,2) node[whitenode] (c)   [] {};
    \draw (2,0) node[whitenode] (d)   [] {};
    
      \draw (a) edge[red, thick, dashed] (b);
      \draw (b) edge[red, thick, dashed] (c);
      \draw (c) edge[red, thick, dashed] (d);
      \draw (c) edge[blue, thick, dotted] (a);
      \draw (a) edge[blue, thick, dotted] (d);
      \draw (d) edge[blue, thick, dotted] (b);
    \end{tikzpicture}
    \caption{Two paths: $H_1$ (dotted blue line), $H_2$~(dashed red line), such that $H_1\cup H_2$ is a clique.}\label{fig:tree4}
\end{minipage}%
\begin{minipage}{0.05\textwidth}
~
\end{minipage}
\begin{minipage}{0.44\textwidth}
        \centering
    \begin{tikzpicture}[scale=0.4]
      \graph[circular placement, radius=2.0cm,
             empty nodes, nodes={circle,draw}] {
        \foreach \x in {a,...,f} {
            \x
          };
        };
      
      \draw (a) edge[red, thick, dashed] (f);
      \draw (a) edge[red, thick, dashed] (b);
      \draw (b) edge[red, thick, dashed] (c);
      \draw (c) edge[red, thick, dashed] (d);
      \draw (d) edge[red, thick, dashed] (e);
      \draw (a) edge[blue, thick, dotted] (c);
      \draw (c) edge[blue, thick, dotted] (e);
      \draw (e) edge[blue, thick, dotted] (f);
      \draw (f) edge[blue, thick, dotted] (b);
      \draw (a) edge[blue, thick, dotted] (d);
      \draw (c) edge[black, thick] (f);
      \draw (f) edge[black, thick] (d);
      \draw (d) edge[black, thick] (b);
      \draw (b) edge[black, thick] (e);
      \draw (e) edge[black, thick] (a);
    \end{tikzpicture}
    \caption{Three consecutive snapshots: $G_1$ (dotted blue line), $G_2$ (dashed red line), $G_3$ (solid black line), which are paths. The 3-smashed graph $G_1 \cup G_2 \cup G_3$ is the complete graph on six vertices.}\label{smtree}
\end{minipage}
\end{figure}

The following theorem is a direct consequence from Theorem~\ref{th:cubeub} and Lemma~\ref{lem:tree_lb}.

\begin{theorem}\label{thm:tree}
    The maximal chromatic number of temporal graphs with each snapshot being a tree is between 6 and 8.
\end{theorem}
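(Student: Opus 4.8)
The plan is to establish the two bounds separately, since both follow immediately from results already proved and the theorem is essentially a corollary.

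For the upper bound, I would observe that every tree is bipartite, hence $\chi(G_i)\le 2$ for each snapshot $G_i$ of any temporal graph $\mathcal{G}$ whose snapshots are all trees. Applying Theorem~\ref{th:cubeub} with $k=2$ then yields $\chi^t(\mathcal{G})\le 2^3=8$. As this holds for every such $\mathcal{G}$, the maximal chromatic number of the class is at most $8$.

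For the lower bound, it suffices to exhibit a single temporal graph in the class whose temporal chromatic number is at least $6$, since the maximal chromatic number is a maximum taken over the whole class. Lemma~\ref{lem:tree_lb} already supplies exactly such an instance: the three snapshots $G_1,G_2,G_3$ of Figure~\ref{smtree} are paths (hence trees), and their $3$-smashed graph $S_2(\mathcal{G})=G_1\cup G_2\cup G_3$ is the complete graph $K_6$. Taking $\mathcal{G}=(G_1,G_2,G_3)$, the first item in the definition of a temporal coloring forces $c_2$ to be a proper coloring of $S_2(\mathcal{G})=K_6$, so any temporal coloring of $\mathcal{G}$ uses at least $\chi(K_6)=6$ colors, i.e. $\chi^t(\mathcal{G})\ge 6$.

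Combining the two directions shows that the maximal chromatic number of the class lies between $6$ and $8$. There is essentially no obstacle in the argument itself; the only points deserving care are verifying that trees are bipartite (so that $k=2$ is admissible in Theorem~\ref{th:cubeub}) and that the tight instance from Lemma~\ref{lem:tree_lb} genuinely belongs to the class, namely that paths count as trees. I would note that the gap between $6$ and $8$ is left open: closing it would require either a coloring scheme improving on the generic $k^3$ bound when all snapshots are trees, or a construction forcing strictly more than $6$ colors, and neither is pursued here.
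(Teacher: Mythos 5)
Your proposal is correct and follows exactly the route the paper takes: the paper states Theorem~\ref{thm:tree} as a direct consequence of Theorem~\ref{th:cubeub} (with $k=2$ since trees are bipartite) for the upper bound of $8$, and of Lemma~\ref{lem:tree_lb} (three paths whose $3$-smashed graph is $K_6$) for the lower bound of $6$. Your added remarks on the remaining gap match the open question the authors themselves raise.
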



We leave as an open question: Does there exist temporal graphs where each snapshot is a tree with temporal chromatic number 7 or 8?

\vspace{2ex}

The following result is about online temporal coloring, and give some information for the open question whether 7 colors are needed to color a temporal tree. We show that there exists some 6-coloring $c_i$ for some well chosen snapshots up to $G_{i+1}$ and some $G_{i+2}$ such that any coloring $c_{i+1}$ needs at least 7 colors. It implies that, to compute an online temporal 6-coloring (if it is possible), choosing some $c_i$ that is just a coloring compatible with $c_{i-1}$ is not enough.

\begin{figure}[ht]
    \centering
    \begin{tikzpicture}[scale=0.85]
\tikzstyle{whitenode}=[draw,circle,minimum size=12pt,inner sep=0pt]

\begin{scope}[shift={(0,0)}]
    
\draw (0,0) node[whitenode] (u6)   [] {$u_6$};
\draw (0.4,0) node[] ()   [above] {$1$};
\draw (0,1) node[whitenode] (u5)   [] {$u_5$};
\draw (0.4,1) node[] ()   [above] {$1$};
\draw (0,2) node[whitenode] (u4)   [] {$u_4$};
\draw (0.4,2) node[] ()   [above] {$2$};
\draw (0,3) node[whitenode] (u3)   [] {$u_3$};
\draw (0.4,3) node[] ()   [above] {$3$};
\draw (0,4) node[whitenode] (u2)   [] {$u_2$};
\draw (0.4,4) node[] ()   [above] {$4$};
\draw (0,5) node[whitenode] (u1)   [] {$u_1$};
\draw (0.4,5) node[] ()   [above] {$5$};
\draw (-1,6) node[whitenode, fill=red!50] (a)   [] {$A$};
\draw (-0.6,6) node[] ()   [above] {$6$};

\draw (1.8,1) node[whitenode] (h1)   [] {$w_1$};
\draw (2.2,1) node[] ()  [above] {$3$};
\draw (3.7,1) node[whitenode] (h2)   [] {$w_2$};
\draw (4.2,1) node[] ()  [above] {$3$};
\draw (2.7,0.25) node[whitenode] (h3)   [] {$w_3$};
\draw (3.1,0.25) node[] ()   [above] {$4$};
\draw (1.8,5) node[whitenode] (h4)   [] {$w_4$};
\draw (2.2,5.25) node[] ()   [above] {$6$};
\draw (3.7,5) node[whitenode] (h5)   [] {$w_5$};
\draw (4.2,5.25) node[] ()   [above] {$6$};

\draw (5.5,0) node[whitenode] (v6)   [] {$v_6$};
\draw (6,0) node[] ()   [above] {$1 $};
\draw (5.5,1) node[whitenode] (v5)   [] {$v_5$};
\draw (6,1) node[] ()   [above]{$1$};
\draw (5.5,2) node[whitenode] (v4)   [] {$v_4$};
\draw (6,2) node[] ()   [above] {$2$};
\draw (5.5,3) node[whitenode] (v3)   [] {$v_3$};
\draw (6,3) node[] ()   [above] {$3$};
\draw (5.5,4) node[whitenode] (v2)   [] {$v_2$};
\draw (6,4) node[] ()   [above] {$4$};
\draw (5.5,5) node[whitenode] (v1)   [] {$v_1$};
\draw (6,5) node[] ()   [above] {$5$};
\draw (6.5,6) node[whitenode,fill=red!50] (b)   [] {$B$};
\draw (6,6) node[] ()   [above] {$6$};

\draw[dotted,red, thick] (h4) edge (h1);
\draw[dotted,red, thick] (u5) edge [out=30,in=-130] (h4);
\draw[dotted,red, thick] (u1) edge [out=-40,in=50] (u5);
\draw[dotted,red, thick] (u1) edge (u2);
\draw[dotted,red, thick] (a) edge [out=-80,in=125] (u2);
\draw[dotted,red, thick] (a) edge [out=-85,in=130] (u3);
\draw[dotted,red, thick] (u3) edge (u4);
\draw[dotted,red, thick] (u4) edge [out=-130,in=125] (u6);
\draw[dotted,red, thick] (u6) edge (h3);

\draw[dotted,red, thick] (h5) edge (h2);
\draw[dotted,red, thick] (v5) edge [out=150,in=-50] (h5);
\draw[dotted,red, thick] (v1) edge [out=-140,in=130] (v5);
\draw[dotted,red, thick] (v1) edge (v2);
\draw[dotted,red, thick] (b) edge [out=-100,in=55] (v2);
\draw[dotted,red, thick] (b) edge [out=-95,in=50] (v3);
\draw[dotted,red, thick] (v3) edge (v4);
\draw[dotted,red, thick] (v4) edge [out=-50,in=55] (v6);
\draw[dotted,red, thick] (v6) edge  (h3);

\draw[opacity=0.5] (2.7,-0.5) node[] ()  [below] {$G_2\cup G_3$};

\draw[teal] (a) edge [out=-90,in=135] (u4);
\draw[teal] (a) edge [out=-95,in=130] (u6);
\draw[teal] (u2) edge [out=-130,in=130] (u4);
\draw[teal] (u3) -- (u2);
\draw[teal] (u1) edge [out=-40,in=40] (u3);
\draw[teal] (u1) -- (h4);

\draw[teal] (u6) -- (h1) -- (u5) -- (h3) -- (v5) -- (h2) -- (v6);

\draw[teal] (b) edge [out=-90,in=45] (v4);
\draw[teal] (b) edge [out=-85,in=50] (v6);
\draw[teal] (v2) edge [out=-50,in=50] (v4);
\draw[teal] (v3) -- (v2);
\draw[teal] (v1) edge [out=-130,in=130] (v3);
\draw[teal] (v1) -- (h5);

\end{scope}

\begin{scope}[shift = {(9.1,0)}]
     
\draw (0,0) node[whitenode] (u6)   [] {$u_6$};
\draw (0.7,0) node[] ()  [above]{$1 \rightarrow 5$};
\draw (0,1) node[whitenode] (u5)   [] {$u_5$};
\draw (0.7,1) node[] ()   [above] {$1 \rightarrow 2$};
\draw (0,2) node[whitenode] (u4)   [] {$u_4$};
\draw (0.7,2) node[] ()   [above] {$2 \rightarrow 5$};
\draw (0,3) node[whitenode] (u3)   [] {$u_3$};
\draw (0,4) node[whitenode] (u2)   [] {$u_2$};
\draw (0,5) node[whitenode] (u1)   [] {$u_1$};
\draw (-1,6) node[whitenode, fill=red!50] (a)   [] {$A$};
\draw (-0.25,6) node[] ()   [above] {$6 \rightarrow 5$};

\draw (1.8,1) node[whitenode] (h1)   [] {$w_1$};
\draw (3.7,1) node[whitenode] (h2)   [] {$w_2$};
\draw (2.7,0.25) node[whitenode] (h3)   [] {$w_3$};
\draw (1.8,5) node[whitenode] (h4)   [] {$w_4$};
\draw (3.7,5) node[whitenode] (h5)   [] {$w_5$};

\draw (5,0) node[whitenode] (v6)   [] {$v_6$};
\draw (5.8,0) node[] ()   [above] {$1 \rightarrow 5$};
\draw (5,1) node[whitenode] (v5)   [] {$v_5$};
\draw (5.8,1) node[] ()   [above] {$1 \rightarrow 2$};
\draw (5,2) node[whitenode] (v4)   [] {$v_4$};
\draw (5.8,2) node[] ()   [above] {$2 \rightarrow 5$};
\draw (5,3) node[whitenode] (v3)   [] {$v_3$};
\draw (5,4) node[whitenode] (v2)   [] {$v_2$};
\draw (5,5) node[whitenode] (v1)   [] {$v_1$};
\draw (6,6) node[whitenode,fill=red!50] (b)   [] {$B$};
\draw (5.25,6) node[] ()   [above]{$6 \rightarrow 5$};

\draw[dashed,blue] (a) -- (u1) -- (u2) -- (u3) -- (u4) -- (u5) -- (u6);
\draw[dashed,blue] (b) -- (v1) -- (v2) -- (v3) -- (v4) -- (v5) -- (v6);
\draw[dashed,blue] (a) -- (b);
\draw[dashed,blue] (h1) -- (h4) -- (h3) -- (h5) -- (h2);
\draw[dashed,blue] (u6) edge [out=0,in=-100] (h1);

\draw[dotted,red, thick, opacity=0.3] (a) edge [out=-80,in=125] (u2);
\draw[dotted,red, thick, opacity=0.3] (a) edge [out=-85,in=130] (u3);
\draw[teal, opacity=0.3] (a) edge [out=-90,in=125] (u4);
\draw[teal, opacity=0.3] (a) edge [out=-95,in=130] (u6);

\draw[dotted,red, thick, opacity=0.3] (b) edge [out=-100,in=55] (v2);
\draw[dotted,red, thick, opacity=0.3] (b) edge [out=-95,in=50] (v3);
\draw[teal, opacity=0.3] (b) edge [out=-90,in=45] (v4);
\draw[teal, opacity=0.3] (b) edge [out=-85,in=50] (v6);

\draw[opacity=0.5] (2.8,-0.5) node[] ()   [below] {$G_4$};
\end{scope}

\end{tikzpicture}
    \caption{Three consecutive paths: a part of $G_2$ (dotted lines), a part of $G_3$ (solid lines), a part of $G_4$ (dashed lines). A node $v$ labeled $a$ means $c_2(v)=a$; a node $v$ labeled $a \rightarrow b$ means that $c_2(v) = a$ and $\{u \mid uv \in E_2 \cup E_3\}=[6]\setminus \{a,b\}$.}\label{paths}
\end{figure}
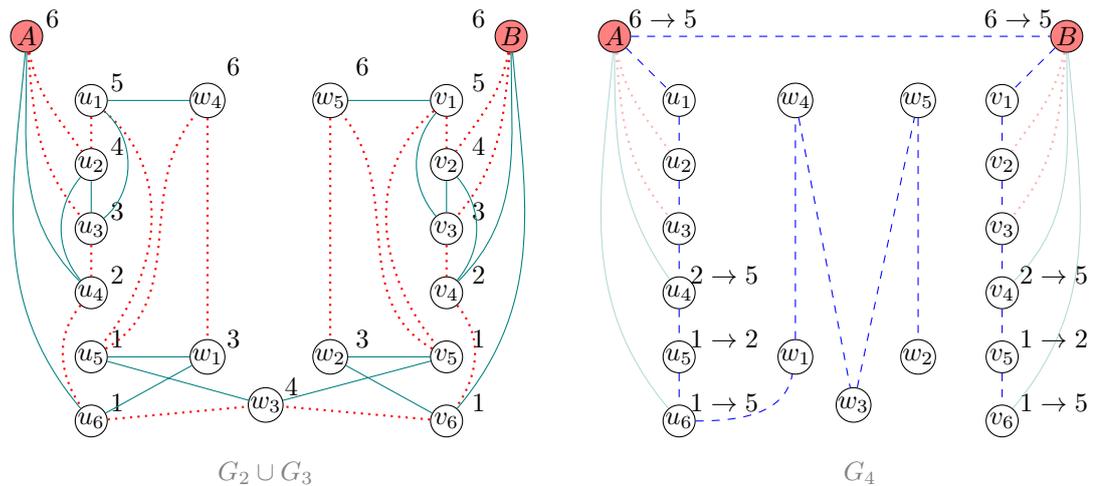

\begin{theorem}\label{th:7colors}
There exists a temporal graph $\mathcal{G}$ whose snapshots are paths, and a coloring $c_i$ for $S_i$ using 6 colors such that any coloring $c_{i+1}$ for $S_{i+1}$ needs at least 7 colors.
\end{theorem}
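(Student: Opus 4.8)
The plan is to realize the explicit instance drawn in Figure~\ref{paths}, fix $i=2$, and show that the displayed coloring $c_2$ cannot be extended. Concretely I would take a temporal graph in which $G_1$ is empty and $G_2,G_3,G_4$ are the three overlaid snapshots on the $19$ vertices $\{u_1,\dots,u_6,v_1,\dots,v_6,w_1,\dots,w_5,A,B\}$. The first routine step is to record the edge sets and check that each of $G_2,G_3,G_4$ is a single Hamiltonian path (so all snapshots are paths, as required), and that $c_2$ (the integer labels on the left) is a proper $6$-coloring of $S_2=G_2\cup G_3$; this is an edge-by-edge verification. Then I assume, for contradiction, that there is a proper $6$-coloring $c_3$ of $S_3=G_2\cup G_3\cup G_4$ that is compatible with $c_2$, i.e. $c_3(x)\notin\{c_2(y)\mid xy\in E_2\cup E_3\}$ for every vertex $x$.

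The core of the argument is to turn compatibility into two-element color lists. I would verify that each of the eight vertices carrying an $a\to b$ label has, under $c_2$, exactly four distinct colors on its $G_2\cup G_3$-neighborhood, so that with only six colors available $c_3$ is squeezed into the remaining pair: $c_3(u_6)\in\{1,5\}$, $c_3(u_5)\in\{1,2\}$, $c_3(u_4)\in\{2,5\}$, $c_3(A)\in\{5,6\}$, and the mirror statements on the $v$-side. (The $w$-vertices do not appear in the final clash, but they are exactly what supplies the blocking colors that pin down the lists for $u_5,u_6$ and $A$.) This is the step where the six-color budget is essential: with a seventh color each of these vertices would acquire that color as an extra option and the lists would no longer be forced.

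Next I would note that in the smashed graph $S_3$ the triples $\{u_4,u_5,u_6\}$ and $\{v_4,v_5,v_6\}$ are triangles, since $u_4u_5,u_5u_6\in E_4$ while $u_4u_6\in E_2$ (and symmetrically for $v$). A properly-colored triangle whose three vertices carry the lists $\{1,5\},\{1,2\},\{2,5\}$ must use three distinct colors drawn from the pool $\{1,2,5\}$, hence $\{c_3(u_4),c_3(u_5),c_3(u_6)\}=\{1,2,5\}$, and likewise for the $v$-triangle. Because $A$ is adjacent to both $u_4$ and $u_6$ in $G_3$, setting $c_3(A)=5$ would push the triangle's color $5$ onto $u_5$, contradicting $c_3(u_5)\in\{1,2\}$; therefore $c_3(A)=6$, and the identical deduction gives $c_3(B)=6$. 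But $AB\in E_4\subseteq E(S_3)$, so $c_3(A)\neq c_3(B)$ — a contradiction. Hence no compatible proper $6$-coloring of $S_3$ exists, so $c_3$ needs at least seven colors.

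I expect the only real obstacle to be bookkeeping rather than ideas: one must read the three superimposed edge sets off the figure without error, confirm that each snapshot is genuinely a path, and check that the eight neighborhood computations yield precisely the claimed two-element lists — a single miscounted neighbor would break the forcing. Once those verifications are in place, the triangle deduction and the final conflict at the edge $AB$ are immediate, and the same contradiction shows the gadget is robust to which admissible value the free vertices $u_1,u_2,u_3,v_1,v_2,v_3$ receive.
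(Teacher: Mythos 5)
Your proposal is correct and follows essentially the same route as the paper: the same 19-vertex gadget from Figure~\ref{paths}, the same compatibility-forced two-element lists on $u_4,u_5,u_6,v_4,v_5,v_6,A,B$, the same observation that $\{u_4,u_5,u_6\}$ (resp.\ the $v$-triple) forms a triangle in $S_3$ forcing color $5$ onto $u_4$ or $u_6$, hence $c_3(A)=c_3(B)=6$, contradicting the edge $AB\in E_4$. The only cosmetic difference is your choice of an empty $G_1$ (the paper keeps $G_1$ a path, which is what the hypothesis ``all snapshots are paths'' literally requires, but any path $G_1$ for which $c_2$ remains proper on $S_2$ works).
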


\begin{proof}

Let $\cal G$ be a temporal graph on lifetime 4 whose snapshots $G_1$, $G_2$, $G_3$ and $G_4$ are paths. Let $\{A,u_1,u_2,u_3,u_4,u_5,u_6,B,v_1,v_2,v_3,v_4,v_5,v_6,w_1,w_2,w_3,w_4,w_5\}$ be 19 vertices of $\cal G$. The description of $\cal G$ can be found in Figure~\ref{paths}. On the left hand side of the figure, path $G_2$ is presented by dotted red line, $G_3$ is solid teal line. On the right hand side, $G_4$ is dashed blue line.

We prove that there exits a coloring $c_2$ on $G_2$ with $6$-colors such that there is no coloring $c_3$ on $G_3$ that is a proper coloring of $G_2\cup G_3\cup G_4$, and $c_3$ is compatible with $c_2$ on  $G_2\cup G_3$.

Let $c_2$ be a coloring of $G_2$ such that : 
\begin{itemize}
    \item $c_2(A)=c_2(B)=c_2(w_4)=c_2(w_5)=6$,
    \item $c_2(u_1)=c_2(v_1)=4$,
    \item $c_2(u_2)=c_2(v_2)=c_2(w_3)=4$,
    \item $c_2(u_3)=c_2(v_3)=c_2(w_1)=c_2(w_2)=3$,
    \item $c_2(u_4)=c_2(v_4)=2$,
    \item $c_2(u_5)=c_2(v_5)=1$,
    \item $c_2(u_6)=c_2(v_6)=1$.
\end{itemize}

Assume by contradiction that there exists a satisfying coloring $c_3$ using $6$ colors. Note that in $G_2 \cup G_3$, vertices $u_4,u_5,u_6,v_4,v_5,v_6,A,B$ have degree $4$.
Since $c_3$ needs to be compatible with $c_2$ in $G_2 \cup G_3$, for every vertex $s$ in $\{u_4,u_5,u_6,v_4,v_5,v_6,A,B\}$, there are two colors that can be used for $c_3(s)$. On the right hand side of Figure~\ref{paths}, a vertex $s$ labeled by $a \rightarrow b$ means that $s$ can be colored by either $a$ or $b$ in $c_3$. As an example, $c_3(u_6)\in \{1;5\}$ because color 2 is forbidden as $u_6u_4\in E_2$,  color 3 is forbidden as $u_6w_1\in E_3$,  color 4 is forbidden as $u_6w_3\in E_2$ and color 6 is forbidden as $u_6a\in E_3$.

Since vertex $u_5$ must have color 1 or 2 in $c_3$, either $u_4$ or $u_6$ have to change its color to $5$ in $c_3$. Since either $u_4$ or $u_6$ has color $5$ in $c_3$, it implies that $c_3(A)\neq 5$ and so $c_3(A)=6$. By the symmetry of the graph, with similar arguments, we have $c_3(B)=6$.

This gives a contradiction since $A$ and $B$ are adjacent in $G_4$. 
\end{proof}

\subsection{Bounded Degeneracy Graphs} \label{degenerate}

In this section, let $\mathcal{G}=(G_1,\ldots,G_T)$ be a temporal graph whose snapshots are all $d$-degenerate graphs. 

Recall that a graph is $d$-degenerate if all its subgraphs contain a vertex with degree at most $d$. Observe that $d$-degenerate graphs $G$ are characterized by a degenerate ordering that is a linear ordering on its vertices $v_1 \leq v_2 \leq \dots \leq v_n$ such that for all $i\in [n]$, $v_i$ has at most $d$ neighbors in $\{v_{i+1},\dots, v_n\}$.

For $i\in \{2,\ldots,T-1\}$, recall that  $S_i({\cal G})$ denote the 3-smashed graph $G_{i-1}\cup  G_i\cup G_{i+1}$. We prove that $\chi(S_i({\cal G}))\leq 6d$ (Lemma~\ref{lem:degenerate_ub}) and we show that it exists some graphs $G_1$, $G_2$ and $G_3$ that are all $d$-degenerate and need at least $5d$ colors to be proper colored (Lemma~\ref{lem:degenerate_lb}). Those results implies that the class of temporal graph whose snapshots are all $d$-degenerate have maximal chromatic number at least $5d$ and at most $12d$.

\begin{lemma}\label{lem:degenerate_ub}
If $G_1$, $G_2$ and $G_3$ are three $d$-degenerate graphs then $\chi(G_1\cup G_2\cup G_3)\leq 6d$.
\end{lemma}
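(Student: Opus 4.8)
The plan is to bound the degeneracy of the union $G_1 \cup G_2 \cup G_3$ and then apply the standard fact (recalled in the preliminaries) that a $d'$-degenerate graph is $(d'+1)$-colorable. The key observation is that the number of edges in any subgraph of the union is controlled: if each $G_i$ is $d$-degenerate, then each $G_i$ has at most $d|V|$ edges on a vertex set of size $|V|$, so any induced subgraph $H = (V_H, E_H)$ of $G_1 \cup G_2 \cup G_3$ satisfies $|E_H| \le 3d|V_H|$. This mirrors exactly the counting argument used in Lemma~\ref{lem:tree_lb} for trees (where each tree contributes at most $|V_H|-1$ edges).

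First I would record that a $d$-degenerate graph on $m$ vertices has at most $dm$ edges: summing the degenerate ordering, each vertex has at most $d$ later neighbors, so the total edge count is at most $dm$. Applied to the induced subgraphs of $G_1$, $G_2$, $G_3$ on any common vertex set $V_H$, I get $|E(H)| \le 3d|V_H|$ for the corresponding induced subgraph $H$ of the union. The sum of degrees in $H$ is $2|E(H)| \le 6d|V_H|$, so the average degree is strictly less than $6d$, forcing a vertex of degree at most $6d-1$ in every induced subgraph. This shows $G_1 \cup G_2 \cup G_3$ is $(6d-1)$-degenerate, hence $6d$-colorable.

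A subtle point I would be careful about: the degeneracy bound $6d-1$ yields $6d$ colors, which matches the claim $\chi(G_1\cup G_2\cup G_3)\le 6d$ on the nose, so I want the strict inequality on the average degree (every induced subgraph has a vertex of degree \emph{strictly} less than $6d$, i.e. at most $6d-1$). Since $|E(H)| \le 3d|V_H|$ gives $\sum_{v \in V_H} \deg_H(v) \le 6d|V_H|$, the minimum degree is at most $6d$, which would only give $(6d+1)$-colorability; to get the sharper $6d$ I would note that not every vertex can simultaneously have degree exactly $6d$ unless the counting is tight, and I would instead argue via the cleaner route that each $G_i$ restricted to $V_H$ retains a vertex of small back-degree. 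Concretely, the main step I expect to require care is choosing the elimination order for the union: I would take, at each stage, the vertex that is last in one of the three degenerate orderings and track that it has at most $d$ neighbors in each snapshot among remaining vertices, giving at most $3d$ forbidden colors — but this naive per-snapshot bound only yields $3d+1$ colors, which is stronger than needed.

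The main obstacle is reconciling the two counting approaches: the simplest argument (combining degenerate orderings vertex-by-vertex) appears to give a bound near $3d+1$, while the edge-counting averaging argument gives $6d$. I would therefore present the edge-counting degeneracy argument as the primary proof, since it parallels Lemma~\ref{lem:tree_lb} and cleanly yields the stated bound, being careful that $|E_H|\le 3d|V_H|$ forces minimum degree at most $6d-1$ by the averaging inequality (a vertex set where every vertex had degree at least $6d$ would force $|E_H| \ge 3d|V_H|$ with equality only in a regular case, which I would rule out or absorb into the strict inequality). Thus $S_i(\mathcal{G})$ is $6d$-colorable, completing the proof of Lemma~\ref{lem:degenerate_ub}.
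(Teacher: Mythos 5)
Your final argument follows the same edge-counting/degeneracy route as the paper's proof, but there is a genuine gap in how you close it. You record the edge bound in its non-strict form ("a $d$-degenerate graph on $m$ vertices has at most $dm$ edges"), which only gives $|E_H|\le 3d|V_H|$ and hence only that some vertex of $H$ has degree at most $6d$; that yields $(6d+1)$-colorability, not $6d$. You correctly flag this, but your proposed repair (ruling out a $6d$-regular tight case, or "absorbing" it into a strict inequality) is left vague and never carried out. The clean resolution, and the one the paper uses, is that the edge bound is already strict: in a degeneracy ordering $v_1,\ldots,v_m$ the number of edges equals $\sum_j |N(v_j)\cap\{v_{j+1},\ldots,v_m\}|\le d(m-1)<dm$, since the last vertex contributes no later neighbours. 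Hence $|E_H|<3d|V_H|$ for every induced subgraph $H$ of $G_1\cup G_2\cup G_3$, the degree sum is strictly below $6d|V_H|$, some vertex has degree at most $6d-1$, and the union is $(6d-1)$-degenerate and therefore $6d$-colorable. This is exactly the role of the strict inequality $|E_H|\le 3(|V_H|-1)$ in Lemma~\ref{lem:tree_lb}.

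Separately, the side-remark that one could eliminate at each stage "the vertex that is last in one of the three degenerate orderings" and incur only $3d$ forbidden colors is false: being last in the ordering of $G_1$ says nothing about that vertex's back-degree in $G_2$ or $G_3$, whose degeneracy orderings may be entirely different. Indeed, Lemma~\ref{lem:degenerate_lb} exhibits three $d$-degenerate graphs whose union is the complete graph on $5d$ vertices, so no $(3d+1)$-coloring of the union can exist for $d\ge 1$. That paragraph should be dropped; the edge-counting argument, with the strictness repaired as above, is the proof.
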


\begin{proof}
The number of edges in a $d$-degenerate graph on $n$ vertices is strictly less than $dn$. Hence the number of edges of $G_1\cup G_2\cup G_3$ is strictly less than $3dn$. Therefore, $G_1\cup G_2\cup G_3$ is a $(6d-1)$-degenerate graph and $G_1\cup G_2\cup G_3$ is $6d$ colorable. 
\end{proof}

\begin{lemma}\label{lem:degenerate_lb}
There exist three $d$-degenerate graphs, $G_1, G_2, G_3$, such that the 3-smashed graph $S_2$,  is a complete graph on $5d$ vertices.
\end{lemma}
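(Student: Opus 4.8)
The plan is to realize $K_{5d}$ as an edge-disjoint union of three $d$-degenerate graphs by exploiting a two-level (group/blow-up) structure. I partition the $5d$ vertices into five groups $B_1,\dots,B_5$, each of size $d$. The edges of $K_{5d}$ then split into the five internal cliques (a $K_d$ on each $B_i$) and the ten complete bipartite graphs $K_{d,d}$, one for each pair $\{B_a,B_b\}$. Viewing the groups as super-vertices, the ten bipartite blocks correspond exactly to the edges of $K_5$, so the whole task becomes: distribute the ten $K_{d,d}$'s and the five $K_d$'s among three graphs so that each of the three stays $d$-degenerate, while their union recovers every edge of $K_{5d}$.

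The first key observation is a \emph{blow-up fact}: if $F$ is a forest on the five super-vertices and $F[d]$ denotes its blow-up (each super-vertex replaced by $d$ vertices, each super-edge replaced by a $K_{d,d}$), then $F[d]$ is $d$-degenerate. Indeed, a forest admits a vertex ordering in which every super-vertex has at most one later neighbor; listing the real vertices group by group in that order gives every real vertex at most $1\cdot d = d$ forward neighbors. Since $K_5$ has arboricity $3$, I can decompose its ten edges into three forests $F_1,F_2,F_3$ (of sizes $4,3,3$); their blow-ups $F_1[d],F_2[d],F_3[d]$ are three $d$-degenerate graphs that cover all ten bipartite blocks.

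It remains to place the five internal cliques. The point is that a clique $K_d$ on a group $B_i$ can be added to a forest blow-up $F_c[d]$ without breaking $d$-degeneracy precisely when $B_i$ is a \emph{sink} (a root of its tree, placed last in the degeneracy ordering): then $B_i$'s real vertices have no forward bipartite neighbors, so the at most $d-1$ internal clique edges keep their forward degree at $d-1\le d$. Each forest $F_c$ with $e_c$ edges has $5-e_c$ trees, hence that many sink slots; with the sizes $4,3,3$ this gives $1+2+2=5$ slots, exactly one per clique. So the delicate step, which I expect to be the main obstacle, is to choose the forest decomposition \emph{and} the rooting so that the five groups are matched to five distinct sinks (a system of distinct representatives).

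I would settle this by an explicit choice. Take $F_1$ to be the spanning path $B_1B_2B_3B_4B_5$, $F_2=\{B_1B_3,\ B_2B_4,\ B_3B_5\}$ and $F_3=\{B_1B_4,\ B_1B_5,\ B_2B_5\}$; one checks these are edge-disjoint forests covering all $\binom{5}{2}=10$ pairs. Rooting the spanning tree $F_1$ at $B_4$, the trees of $F_2$ at $B_1$ and $B_2$, and the trees of $F_3$ at $B_3$ and $B_5$, makes each of $B_1,\dots,B_5$ a root exactly once, so I assign the clique on $B_i$ to the forest where $B_i$ is that root (giving $K_d$ on $B_4$ to $G_1$, on $B_1,B_2$ to $G_2$, on $B_3,B_5$ to $G_3$). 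A direct verification of the three resulting graphs, reading each degeneracy ordering from the leaves up to the root with the clique group placed last, confirms that every real vertex has at most $d$ forward neighbors, so each $G_c$ is $d$-degenerate; since their union contains all internal and all bipartite edges, $S_2=G_1\cup G_2\cup G_3=K_{5d}$, as required. (As a sanity check, the count $5=1+2+2$ of sink slots is exactly what forces the group number to be $5$: with six groups one would obtain only three spanning trees and three sinks, too few to absorb six cliques, consistent with $5d$ being essentially optimal here.)
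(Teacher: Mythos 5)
Your proof is correct, but it takes a genuinely different route from the paper's. You decompose $K_{5d}$ group-wise into forest blow-ups plus internal cliques: you split $K_5$ into three edge-disjoint forests (using that its arboricity is $3$), observe that the $d$-blow-up of a forest is $d$-degenerate, and then absorb the five internal $K_d$'s at the five available tree roots via a system of distinct representatives; your explicit forests and rootings check out, and each of your three graphs is indeed $d$-degenerate with the union equal to $K_{5d}$. The paper instead gives a direct construction with three edge-set gadgets on groups $A_1,\dots,A_5$: complete bipartite blocks $T_1(A_i,A_j)$, cliques $T_3(A_i)$, and a ``staircase'' set $T_2(A_i,A_j)$ that packs the clique on $A_i$ together with half of the bipartite graph to $A_j$ into one $d$-degenerate piece, so that $T_2(A_i,A_j)\cup T_2(A_j,A_i)$ placed in two different snapshots reconstitutes the complete graph on $A_i\cup A_j$. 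The two approaches place the clique edges differently: the paper never isolates a clique at a ``sink'' but instead interleaves it with half of a bipartite block, whereas you keep bipartite blocks whole and reserve roots for the cliques. Your version is more modular and makes the counting behind the choice of five groups transparent (three forests on $g$ groups yield $3g-\binom{g}{2}$ root slots, which equals the number of cliques exactly when $g=5$); the paper's staircase trick is more ad hoc but shows that a $K_{2d}$ can be split into two $d$-degenerate halves, a flexibility your decomposition does not exploit. Both yield the same bound of $5d$.
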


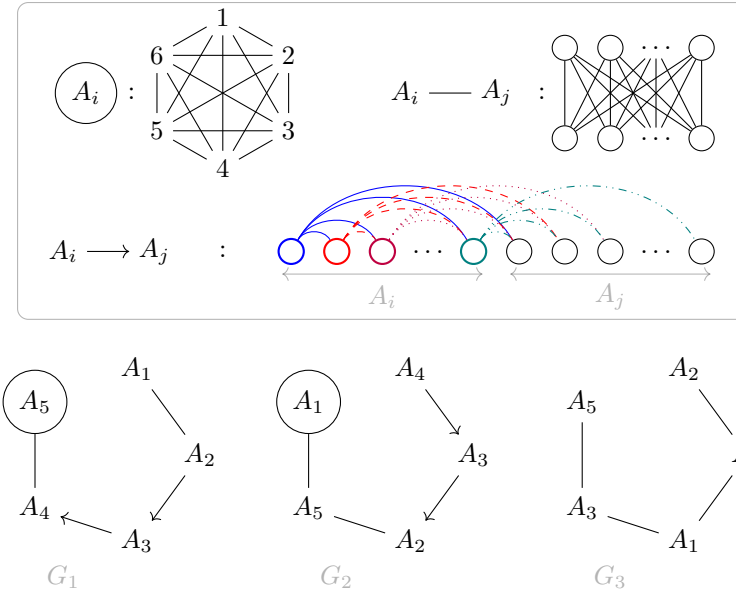
\begin{figure}
    \centering
    \centering

\begin{tikzpicture}[scale=0.6]
\begin{scope}[shift={(0,0)}]
\node at (360/5*1:2cm) (A1) {$A_1$};
\node[circle,draw] at (360/5*2:2cm) (A5) {$A_5$};
\node at (360/5*3:2cm) (A4) {$A_4$};
\node at (360/5*4:2cm) (A3) {$A_3$};
\node at (360/5*5:2cm) (A2) {$A_2$};
\draw (A1)--(A2);
\draw[->] (A2)--(A3);
\draw[->] (A3)--(A4);
\draw (A4)--(A5);
\end{scope}

\begin{scope}[shift={(6,0)}]
\node at (360/5*1:2cm) (A4) {$A_4$};
\node[circle,draw] at (360/5*2:2cm) (A1) {$A_1$};
\node at (360/5*3:2cm) (A5) {$A_5$};
\node at (360/5*4:2cm) (A2) {$A_2$};
\node at (360/5*5:2cm) (A3) {$A_3$};
\draw (A5)--(A2);
\draw[->] (A4)--(A3);
\draw[->] (A3)--(A2);
\draw (A1)--(A5);
\end{scope}

\begin{scope}[shift={(12,0)}]
\node at (360/5*1:2cm) (A2) {$A_2$};
\node at (360/5*2:2cm) (A5) {$A_5$};
\node at (360/5*3:2cm) (A3) {$A_3$};
\node at (360/5*4:2cm) (A1) {$A_1$};
\node at (360/5*5:2cm) (A4) {$A_4$};
\draw (A2)--(A4)--(A1)--(A3)--(A5);
\end{scope}

\begin{scope}[shift={(0,-2.7)}]
\draw[opacity=0.3] (-1,0) node[] ()   [] {$G_{1}$};
\draw[opacity=0.3] (5,0) node[] ()   [] {$G_{2}$};
\draw[opacity=0.3] (11,0) node[] ()   [] {$G_{3}$};
\end{scope}

\begin{scope}[shift={(-1.5,4.5)}]
\tikzstyle{whitenode}=[draw,circle,minimum size=7pt,inner sep=0pt]
\draw (0.5,0) node (ai)   [] {$A_i$};
\draw (2.5,0) node (aj)   [] {$A_j$};
\draw (4,0) node ()   [] {$:$};
\draw[->] (ai)--(aj);
\end{scope}

\begin{scope}[shift={(4,4.5)}]
\draw (0,0) node[circle, draw=blue, thick] (a0)   [] {};
\draw (1,0) node[circle,draw=red, thick] (a1)   [] {};
\draw (2,0) node[circle,draw=purple, thick] (a2)   [] {};
\draw (3,0) node[] (a3)   [] {$\ldots$};
\draw (4,0) node[circle,draw=teal, thick] (a4)   [] {};
\draw (5,0) node[circle,draw,black] (a5)   [] {};
\draw (6,0) node[circle,draw,black] (a6)   [] {};
\draw (7,0) node[circle,draw,black] (a7)   [] {};
\draw (8,0) node[] (a8)   [] {$\ldots$};
\draw (9,0) node[circle,draw,black] (a9)   [] {};
\draw[blue] (a0) edge [out=60,in=120] (a1);
\draw[blue] (a0) edge [out=60,in=120] (a2);
\draw[blue] (a0) edge [out=60,in=120] (a4);
\draw[blue] (a0) edge [out=60,in=120] (a5);
\draw[dashed,red] (a1) edge [out=60,in=120] (a2);
\draw[dashed,red] (a1) edge [out=60,in=120] (a4);
\draw[dashed,red] (a1) edge [out=60,in=120] (a5);
\draw[dashed,red] (a1) edge [out=60,in=120] (a6);
\draw[dotted,purple] (a2) edge [out=60,in=120] (a4);
\draw[dotted,purple] (a2) edge [out=60,in=120] (a5);
\draw[dotted,purple] (a2) edge [out=60,in=120] (a6);
\draw[dotted,purple] (a2) edge [out=60,in=120] (a7);
\draw[dashdotdotted,teal] (a4) edge [out=60,in=120] (a5);
\draw[dashdotdotted,teal] (a4) edge [out=60,in=120] (a6);
\draw[dashdotdotted,teal] (a4) edge [out=60,in=120] (a7);
\draw[dashdotdotted,teal] (a4) edge [out=60,in=120] (a9);

\draw[<-> , opacity=0.3] (-0.2,-0.5)--(4.2,-0.5);
\draw[opacity=0.3] (2,-1 )node  () [] {$A_i$};
\draw[<-> , opacity=0.3] (4.8,-0.5)--(9.2,-0.5);
\draw[opacity=0.3] (7,-1 )node  () [] {$A_j$};
\end{scope}

\begin{scope}[shift={(-0.5,8)}]
   \node[circle,draw] at (0,0) (ai) {$A_i$};
   \node at (1,0) () {$:$};
\end{scope}
\begin{scope}[shift={(2.5,8)}]
    \graph{ subgraph K_n [n=6, clockwise ]};
\end{scope}

\begin{scope}[shift={(6,8)}]
\tikzstyle{whitenode}=[draw,circle,minimum size=7pt,inner sep=0pt]
\draw (0.5,0) node (ai)   [] {$A_i$};
\draw (2.5,0) node (aj)   [] {$A_j$};
\draw (3.5,0) node ()   [] {$:$};
\draw (ai)--(aj);
\end{scope}
\begin{scope}[shift={(10,7)}]
    \draw (0,0) node[circle, draw] (a0)   [] {};
    \draw (1,0) node[circle,draw] (a1)   [] {};
    \draw (2,0) node[] (a)   [] {$\ldots$};
    \draw (3,0) node[circle,draw] (a2)   [] {};
    \draw (0,2) node[circle,draw] (a4)   [] {};
    \draw (1,2) node[circle,draw] (a5)   [] {};
    \draw (2,2) node[] (b)   [] {$\ldots$};
    \draw (3,2) node[circle,draw] (a6)   [] {};
    \graph [] {
    {(a0),(a1),(a2),(a)} -- [complete bipartite] {(a4),(a5),(a6),(b)} };
\end{scope}

\begin{scope}[shift={(-2,3)}]
    \draw[opacity=0.3,rounded corners] (0,0) rectangle (16,7);
\end{scope}

\end{tikzpicture}
    \caption{Three consecutive $d$-degenerate snapshots $G_{1}, G_2,G_3$ such that the $3$-smashed graph $G_1 \cup G_2 \cup G_3$ is a complete graph. For every $i \in \{1,\ldots,5\}$, each set $A_i$ includes $d$ vertices. A circle around set $A_i$ means that we connect any two vertices of $A_i$ to make a clique, $A_i-A_j$ means we connect every vertex of $A_i$ with every vertex of $A_j$ to make a complete bipartite graph, and $A_i\rightarrow A_j$ means that we order vertices by $A_i,A_j$, then every vertex in $A_i$ are connected to the next $d$ vertices with respect to ordering $A_i,A_j$. The degenerate orderings of $G_1, G_2,G_3$ are respectively $(A_1 \leq A_2 \leq A_3 \leq A_4 \leq A_5)$, $(A_4 \leq A_3 \leq A_2 \leq A_5 \leq A_1)$, and $(A_2 \leq A_4 \leq A_1 \leq A_3 \leq A_5)$.}
    \label{fig:degenerate}
\end{figure}

\begin{proof} 
Let $A_1,\ldots, A_{5}$ be five sets of $d$ vertices each and set $V=A_1\cup A_2\cup A_3\cup A_4\cup A_5$ (so $|V|=5d$). For every $i \in [5]$, define a linear ordering on vertices of $A_i$, that is denoted by $a_i^1\leq \ldots\leq a_i^{d}$. For every $i_1,\ldots,i_{\ell} \in [5]$, we use notation $(A_{i_1}\leq \ldots \leq A_{i_{\ell}})$ to represent the set of linear-ordered vertices induced by the ordering of vertices in each bag, i.e.  $a_{i_1}^1\leq \ldots\leq a_{i_1}^d\leq \ldots\leq a_{i_{\ell}}^1\leq \ldots\leq a_{i_{\ell}}^d$.

An illustration of the construction can be found in Figure~\Ref{fig:degenerate}. We define  three following types of edge sets:

\begin{itemize}
    \item For every two bags $A_i$ and $A_j$, let $T_1(A_i,A_j)$ be the edge set of a complete bipartite graph between $A_i$ and $A_j$ (denoted by $A_i-A_j$ in Figure~\ref{fig:degenerate}). Hence,  $T_1(A_i,A_j)=\{a_ia_j \mid a_i\in A_i \mbox{ and }a_j\in A_j\}$.
    \item For every two bags $A_i$ and $A_j$, let $T_2(A_i,A_j)$ be the set of edges connecting every vertex in $A_i$ with the next $d$ consecutive vertices with respect to the ordering $a_i^1,\ldots,a_i^d,a_j^1,\ldots,a_j^d$ (denoted by $A_i\rightarrow A_j$ in Figure~\ref{fig:degenerate}). Hence, $T_2(A_i,A_j)=T_3(A_1)\cup \{a^k_i a^l_j \mid a^k_i\in A_i, a^l_j\in A_j \mbox{ and }l\leq k\} $.
    \item For every bag $A_i$, let $T_3(A_i)$ be a set of edges connecting all pairs of vertices in $A_i$, that make $A_i$ be a clique (denoted as a circled $A_i$ in Figure~\ref{fig:degenerate}). Hence, $T_3(A_i)=\{a_ia_i' \mid a_i,a_i'\in A_i \}$.
\end{itemize}

Observe that the union of  $T_2(A_i,A_j)\cup T_2(A_j,A_i)$ induces a complete graph on $A_i\cup A_j$. Indeed, we already remark that it implies $A_i$ and $A_j$ are cliques. In addition, for all $a_i^k\in A_i$ and $a_j^l\in A_j$ either $k\leq l$ and $a_i^k a_j^l \in T_2(A_j,A_i)$ or  $l\leq k$ and $a_i^k a_j^l \in T_2(A_j,A_i)$. 

\vspace{1ex}

Now, we are ready to construct the three snapshots along with their degenerate orderings. Observe that the degenerate ordering is not the same for all 3 snapshots.

\begin{enumerate}
    \item Snapshot $G_1 = (V,E_1)$: $E_1 = T_1(A_1,A_2) \cup T_1(A_4,A_5) \cup T_2(A_2,A_3) \cup T_2(A_3,A_4) \cup T_3(A_5)$, and the ordering of vertices is $A_1\leq A_2\leq A_3\leq A_4\leq A_5$;
    \item Snapshot $G_2 = (V,E_2)$: $E_2 = T_1(A_2,A_5) \cup T_1(A_5,A_1) \cup T_2(A_4,A_3) \cup T_2(A_3,A_2) \cup T_3(A_1)$, the ordering of vertices is $A_4\leq A_3\leq A_2\leq A_5\leq A_1$;
    \item Snapshot $G_3 = (V,E_3)$: $E_3 = T_1(A_2,A_4) \cup T_1(A_4,A_1), T_1(A_1,A_3) \cup T_1(A_3,A_5)$, the ordering of vertices is $A_2\leq A_4\leq A_1\leq A_3\leq A_5$.
\end{enumerate}

For a set of linear-ordered vertices $x_1,\ldots,x_{\ell}$, for every index $j$, let the set of vertices to the left of $x_j$ be $\{ x_{j+1},\ldots,x_{\ell} \}$.

We prove that graphs $G_1,G_2,G_3$ are $d$-degenerate graphs with the defined vertex orderings.  Consider snapshot $G_1$ with the vertex ordering $A_1,A_2,A_3,A_4,A_5$ we have: a vertex in $A_1$ has exactly $d$ neighbors to its left that are in $A_2$, a vertex in $A_2$ has $d$ neighbors to its left that are in $A_2$ and $A_3$, a vertex in $A_3$ has $d$ neighbors to its left that are in $A_3$ and $A_4$, a vertex in $A_4$ has $d$ neighbors to its left that are in $A_5$, and a vertex in $A_5$ has $d$ neighbors to its left that are in $A_5$. It implies that $G_1$ is a $d$-degenerate graph. The proof on $G_2$ and $G_3$ being similar, we omit it.

\vspace{1.5ex}

We now prove that the three-smashed graph $G_1 \cup G_2 \cup G_3$ is a complete graph on $5d$ vertices. We prove that for all $i,j\in [5]$, $A_i$ is a clique and $A_i$ is complete to $A_j$.

\begin{itemize}
    \item For all $i\in [5]$, $A_i$ is a clique:  $A_1$ is a clique $G_2$, $A_2$ is a clique $G_1$ (as $T_2(A_2,A_3)$ belongs to $G_1$), $A_3$ and $A_4$ is connected in $G_2$ (as $T_2(A_3,A_2)$ and $T_2(A_4,A_3)$ belongs to $G_2$),  and  $A_5$ is a clique in $G_1$.
    \item For all $i\in \{2,3,4,5\}$, $A_1$ is complete to $A_i$: $A_1$ is complete to $A_2$ and $A_5$ in $G_1$ and $A_1$ is complete to $A_3$ and $A_4$ in $G_3$.  
    \item For all $i\in \{3,4,5\}$, $A_2$ is complete to v $A_i$: $A_2$ is complete to $A_3$  as $T_2(A_2,A_3)\cup T_2(A_3,A_2)$ belongs to  in $G_1\cup G_2$, $A_2$ is complete to $A_4$ in $G_3$ and $A_2$ is complete to $A_5$ in $G_2$.  
    \item For $i \in \{4,5\}$, $A_3$ is complete to $A_i$. $A_3$ is complete to $A_4$ by $T_2(A_3,A_4)$ in $G_1$ and $T_2(A_4,A_3)$ in $G_2$. $A_3$ is complete to $A_5$ in $G_3$.
    \item $A_4$ is complete to  $A_5$ in $G_1$.
\end{itemize}

Hence $G_1 \cup G_2 \cup G_3$ is a complete graph on $5d$ vertices and that concludes the proof. 
\end{proof}

Theorem~\ref{thm:degenerate} is a direct consequence of Theorem~\ref{thm:ub2}, Lemma~\ref{lem:degenerate_ub} and Lemma~\ref{lem:degenerate_lb}.

\begin{theorem}\label{thm:degenerate}
    Let $\mathcal{G}$ be a temporal graph whose snapshots are $d$-degenerate graphs. The temporal chromatic number of $\mathcal{G}$ is between $5d$ and $12d$, i.e., $5d \leq \chi^t(\mathcal{G}) \leq 12d$.
\end{theorem}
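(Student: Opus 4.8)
The plan is to derive both bounds by directly combining the three earlier results, treating the upper bound as a universal statement over the whole class and the lower bound as an existence statement witnessed by the construction of Lemma~\ref{lem:degenerate_lb}.

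For the upper bound $\chi^t(\mathcal{G}) \leq 12d$, I would argue as follows. Since every snapshot $G_i$ is $d$-degenerate, any three consecutive snapshots $G_{i-1}, G_i, G_{i+1}$ are three $d$-degenerate graphs, so Lemma~\ref{lem:degenerate_ub} gives $\chi(G_{i-1} \cup G_i \cup G_{i+1}) \leq 6d$ for every $i \in \{2, \ldots, T-1\}$. Thus the hypothesis of Theorem~\ref{thm:ub2} holds with $k = 6d$, and applying that theorem immediately yields $\chi^t(\mathcal{G}) \leq 2 \cdot 6d = 12d$. This part is completely mechanical once the two ingredients are in place.

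For the lower bound, I would exhibit a single temporal graph in the class whose temporal chromatic number is at least $5d$. Take the three $d$-degenerate graphs $G_1, G_2, G_3$ produced by Lemma~\ref{lem:degenerate_lb}, and let $\mathcal{G} = (G_1, G_2, G_3)$ be the temporal graph of lifetime $3$ they define. By item~1 of the definition of a temporal $k$-coloring, the coloring $c_2$ of any temporal coloring of $\mathcal{G}$ must be a proper coloring of $S_2(\mathcal{G}) = G_1 \cup G_2 \cup G_3$. Lemma~\ref{lem:degenerate_lb} guarantees that this $3$-smashed graph is the complete graph on $5d$ vertices, which has chromatic number exactly $5d$. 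Hence $c_2$ alone already requires $5d$ colors, so $\chi^t(\mathcal{G}) \geq \chi(S_2(\mathcal{G})) = 5d$.

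There is no real obstacle here, as the statement is a corollary of machinery already developed. The only point requiring care is interpretive rather than technical: the bound $12d$ is universal, holding for every temporal graph with $d$-degenerate snapshots, whereas $5d$ is a lower bound on the maximal chromatic number of the class, realised by the specific construction of Lemma~\ref{lem:degenerate_lb} and relying on the fact that $\chi^t$ dominates the chromatic number of the smashing of any three consecutive snapshots.
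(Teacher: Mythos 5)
Your proof is correct and follows exactly the paper's route: the paper states Theorem~\ref{thm:degenerate} as a direct consequence of Theorem~\ref{thm:ub2}, Lemma~\ref{lem:degenerate_ub} and Lemma~\ref{lem:degenerate_lb}, which is precisely the combination you spell out. Your remark that the lower bound must be read as a statement about the maximal chromatic number of the class (witnessed by one construction) rather than a bound holding for every such temporal graph is a correct and worthwhile clarification of the theorem's phrasing.
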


\subsection{Bounded-Degree Graphs}\label{degbounded}

In this section, we give a tight upper bound of $(3\Delta+1)$ on the chromatic number of  3-smashed graphs with snapshots being all  $\Delta$-bounded graphs. Theorem~\Ref{thm:ub2} implies that there exists a $(6\Delta+2)$-coloring for every temporal graph with all snapshots being $\Delta$-bounded graph. In this section, we improve this bound by showing that  the chromatic number of a temporal graph with all snaphots being $\Delta$-bounded graphs is at most  $(5\Delta+1)$.

\begin{lemma}\label{lem:boundedDeg_lb}
The chromatic number of the three-smashed graph of any three arbitrary $\Delta$-bounded graphs is at most $(3\Delta+1)$. Moreover, this bound is tight.
\end{lemma}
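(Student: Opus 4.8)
```latex
\textbf{Plan.} The statement has two parts: an upper bound $\chi(G_1\cup G_2\cup G_3)\le 3\Delta+1$ for any three $\Delta$-bounded graphs on the same vertex set, and tightness (a construction achieving $3\Delta+1$).

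For the upper bound, the natural first move is to observe that in each snapshot $G_j$ every vertex has degree at most $\Delta$, so in the smashed graph $S = G_1\cup G_2\cup G_3$ every vertex has degree at most $3\Delta$. A crude greedy argument then gives $\chi(S)\le 3\Delta+1$ immediately, since any graph of maximum degree $D$ is $(D+1)$-colorable. So the upper bound itself is not the hard part — it follows just from the fact that the maximum degree of $S$ is at most $3\Delta$. (One should double-check the degree bound is exactly $3\Delta$ and that we are not secretly in a Brooks-type situation where a better bound is attainable; but for the stated bound $3\Delta+1$ the elementary max-degree argument suffices.)

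The interesting half is \emph{tightness}: I need to exhibit three $\Delta$-bounded graphs whose union is a graph requiring exactly $3\Delta+1$ colors. The cleanest way to force $3\Delta+1$ colors is to make the union $S$ a clique on $3\Delta+1$ vertices, since $\chi(K_n)=n$. So the goal becomes: decompose the edge set of $K_{3\Delta+1}$ into three subgraphs $G_1,G_2,G_3$, each of maximum degree at most $\Delta$. In $K_{3\Delta+1}$ every vertex has degree $3\Delta$, so the only way to split this into three $\Delta$-bounded graphs is to have each vertex contribute \emph{exactly} $\Delta$ edges to each of the three parts — i.e. I need a proper edge-partition of $K_{3\Delta+1}$ into three $\Delta$-regular spanning subgraphs. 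Since $K_{3\Delta+1}$ is $3\Delta$-regular on an odd number of vertices (when $3\Delta+1$ is odd, i.e. $\Delta$ even; and one must handle parity of $3\Delta+1$ carefully), such a decomposition is exactly a partition into $\Delta$-regular factors.

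\textbf{Main obstacle and how I would handle it.} The crux is producing this $\Delta$-regular factorization of $K_{3\Delta+1}$ into three factors. The standard tool is the $2$-factorization of complete graphs: $K_{2m+1}$ decomposes into $m$ Hamiltonian cycles (each a $2$-regular factor), via the classical Walecki construction placing $2m$ vertices on a circle with one central vertex and rotating a zig-zag path. With $n = 3\Delta+1$ vertices, I would split the $\lfloor (n-1)/2\rfloor$ Hamiltonian cycles of this decomposition into three groups of equal size, each group giving a $\Delta$-regular subgraph — this requires $(n-1)/2 = 3\Delta/2$ cycles to divide evenly into three groups of $\Delta/2$ cycles each, which works cleanly when $\Delta$ is even. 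For odd $\Delta$, $3\Delta+1$ is even, and I would instead use a $1$-factorization of $K_{3\Delta+1}$ into $3\Delta$ perfect matchings and group them into three bundles of $\Delta$ matchings each, yielding three $\Delta$-regular graphs. Either way the key point is that a $\Delta$-regular factorization of $K_{3\Delta+1}$ into three factors exists, and assigning these factors as $G_1,G_2,G_3$ makes each snapshot $\Delta$-bounded while their smashing is the complete graph $K_{3\Delta+1}$, forcing $\chi(S)=3\Delta+1$. The delicate bookkeeping is the parity case split and verifying the factor count divides by three; I expect the paper to either give an explicit small construction (as in the tree and degeneracy lemmas, where an explicit decomposition of a clique into paths or ordered bags was drawn) or to invoke the existence of such regular factorizations directly.
```
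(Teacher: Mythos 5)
Your upper bound argument is identical to the paper's: the smashed graph has maximum degree at most $3\Delta$, hence is $(3\Delta+1)$-colorable by greedy coloring. For tightness, both you and the paper make the union a clique $K_{3\Delta+1}$, and your observation that this forces each snapshot to be exactly $\Delta$-regular is correct; but your route to the decomposition differs. You invoke classical factorization theorems — Walecki's Hamiltonian decomposition of $K_{2m+1}$ when $\Delta$ is even, and the $1$-factorization of $K_{2m}$ when $\Delta$ is odd — grouping cycles or matchings into three $\Delta$-regular bundles. This works, but it costs you a parity case split and an appeal to nontrivial (if standard) existence results. The paper instead gives a single explicit construction valid for all $\Delta$: take $V = \{u\} \cup W_1 \cup W_2 \cup W_3$ with $|W_i| = \Delta$, and let $G_i$ consist of a clique on $W_i \cup \{u\}$ (a clique on $\Delta+1$ vertices, so $\Delta$-regular there) together with the complete bipartite graph between the other two bags (again degree $\Delta$). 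Each $G_i$ is $\Delta$-regular by inspection, the three edge sets are disjoint, and their union is $K_{3\Delta+1}$. Your approach is more conceptual — it makes clear that \emph{any} $\Delta$-regular $3$-factorization of $K_{3\Delta+1}$ would do, of which the paper's is one instance — while the paper's buys uniformity and self-containedness. Both proofs are valid.
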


\begin{proof}
Let $G_{i-1}, G_i, G_{i+1}$ be three $\Delta$-bounded graphs. The maximum degree of $G_{i-1} \cup G_i \cup G_{i+1}$ is at most $3\Delta$, so the 3-smashed graph $G_{i-1}\cup G_i\cup  G_{i+1}$ is $(3 \Delta +1)$ -colorable.

The last part of the proof define three snapshots to show that the above upper bound is tight. 

Let $V= \{u\} \cup W_1 \cup W_2 \cup W_3$ be a vertex set such that, for $i\in [3]$, $|S_i|=\Delta$. For $i\in [3]$, let $G_i = (V,E_i)$, be three $\Delta$-bounded graphs whose edge sets is

\begin{itemize}
    \item $E_1  = \{uv \mid u,v \in W_1 \cup \{u\}\} \cup \{uv \mid u\in W_2 \mbox{ and } v \in W_3 \}$; 
    \item $E_2 = \{uv \mid u,v \in W_2 \cup \{u\} \} \cup \{uv \mid u \in W_1 \mbox{ and } v \in W_3 \}$;
    \item $E_3 = \{uv \mid u,v \in W_3 \cup \{u\} \} \cup \{uv \mid u \in W_2 \mbox{ and } v \in W_1 \}$.
\end{itemize}

Observe that the graph $G_1 \cup G_2 \cup G_3$ is a complete graph on $(3\Delta+1)$ vertices, so its chromatic number is $(3\Delta +1)$.
\end{proof}

\begin{lemma}\label{lem:boundedDeg_ub}
Let ${\cal G}=(G_1,\dots, G_T)$ be a temporal graph whose snapshots $G_i$ is a $\Delta$-bounded graph, for $i\in [T]$. The chromatic number of $\mathcal{G}$ is at most $(5\Delta +1)$, i.e., $\chi^t({\cal G})\leq 5\Delta +1$.
\end{lemma}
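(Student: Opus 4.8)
The plan is to construct a temporal coloring greedily, processing the snapshots in increasing order of time and, within each snapshot, the vertices in an arbitrary order. This fits the online setting, since the coloring $c_i$ I produce will depend only on $G_{i-1}, G_i, G_{i+1}$ and on the previously computed coloring $c_{i-1}$, hence on no snapshot at time $\ge i+2$. The point is to beat the $2(3\Delta+1)=6\Delta+2$ bound that Theorem~\ref{thm:ub2} gives when combined with Lemma~\ref{lem:boundedDeg_lb}, by not wasting colors on two disjoint palettes.

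Suppose $c_1,\ldots,c_{i-1}$ have already been built and I want to define $c_i$. When I reach a vertex $v$, two families of colors must be avoided. First, to make $c_i$ a proper coloring of the $3$-smashed graph $S_i(\mathcal{G})=G_{i-1}\cup G_i\cup G_{i+1}$ (Item~(1) of the definition), $c_i(v)$ must differ from $c_i(w)$ for every already-colored neighbor $w$ of $v$ in $S_i(\mathcal{G})$; since each of $G_{i-1}, G_i, G_{i+1}$ is $\Delta$-bounded, $v$ has at most $3\Delta$ neighbors there. Second, to be compatible with $c_{i-1}$ on $G_{i-1}\cup G_i$ (Item~(2)), $c_i(v)$ must avoid $c_{i-1}(w)$ for every neighbor $w$ of $v$ in $G_{i-1}\cup G_i$, of which there are at most $2\Delta$. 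The key observation is that these two families are a priori distinct sets of forbidden colors: one lists current colors $c_i(\cdot)$, the other previous colors $c_{i-1}(\cdot)$, so I bound their union simply by the sum of the two degree bounds. Summing gives at most $3\Delta+2\Delta=5\Delta$ forbidden colors at $v$, and therefore a palette of $5\Delta+1$ colors always leaves a free color for $c_i(v)$.

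A few points I would record carefully to close the argument. The bound $3\Delta$ on the proper-coloring constraint uses the \emph{full} degree of $v$ in $S_i(\mathcal{G})$ rather than only its already-colored back-neighbors; this is exactly why an arbitrary vertex order suffices and no degeneracy ordering is needed. The compatibility condition $\forall uv\in E_{i-1}\cup E_i,\ c_{i-1}(u)\neq c_i(v)$ is symmetric in each edge, but both directions are handled automatically: the requirement $c_{i-1}(w)\neq c_i(v)$ is enforced when $v$ is colored, and $c_{i-1}(v)\neq c_i(w)$ when $w$ is colored. The base case $c_1$ (proper on $G_1\cup G_2$, with no compatibility constraint) and the terminal case $c_T$ are strictly easier, forbidding at most $2\Delta$ colors. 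I expect no genuine obstacle here, only the bookkeeping that the $3\Delta+2\Delta$ split truly accounts for every requirement imposed by Items~(1) and~(2), together with the verification that each $c_i$ is indeed determined from data available online.
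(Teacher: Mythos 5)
Your proposal is correct and matches the paper's argument in substance: the paper also proceeds by induction over the snapshots, observes that compatibility with $c_{i-1}$ forbids at most $2\Delta$ colors per vertex (leaving lists of size $3\Delta+1$), and then colors the $3\Delta$-bounded smashed graph $G_{i-1}\cup G_i\cup G_{i+1}$ from those lists via its $3\Delta$-degeneracy, which is the same $2\Delta+3\Delta<5\Delta+1$ count you perform in a single greedy pass. The only (cosmetic) difference is that the paper phrases the second step as list-colorability of a degenerate graph, whereas you fold it into one arbitrary-order greedy sweep using the full degree bound.
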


\begin{proof}
We prove by induction  on $i\le T$ that colorings $c_1,\ldots,c_i$ can be computed. We compute $c_1$ as a $(2\Delta+1)$-coloring of the $2\Delta$-bounded degree smashing $G_1\cup G_2$. Assume that there is a $(5\Delta+1)$-coloring $c_1,\ldots,c_{i}$ satisfying that:
    \begin{itemize}
        \item For every $j \in [i]$, $c_j$ is a proper coloring of $G_{j-1} \cup G_{j} \cup G_{j+1}$.
        \item For every $j \in [ i-1]$,  $c_j$ and $c_{j+1}$ are compatible on $G_j \cup G_{j+1}$.
    \end{itemize}

We show that there exists $c_{i+1}$ such that (1) $c_{i+1}$ is a proper coloring of $G_i \cup G_{i+1} \cup G_{i+2}$; (2) $c_{i+1}$ and $c_i$ are compatible on $G_i \cup G_{i+1}$. In every two-smashed graph $G_i \cup G_{i+1}$, each vertex $v$ has at most $2 \Delta$ neighbors. Therefore, there are at least $(3\Delta+1)$ colors that can be used to color $v$ in $c_{i+1}$ without violating the compatibility condition. In other words, for every vertex $v$, there are a list $L_v$ of $(3\Delta+1)$ colors such that if every vertex $v$ choose a color $c_{i+1}(v) \in L_v$, then $c_{i+1}$ is compatible with $c_i$. It is left to prove that there is a way to choose $c_{i+1}(v) \in L_v$ for every vertex $v$ such that $c_{i+1}$ is a proper coloring for $G_i \cup G_{i+1} \cup G_{i+2}$. 
    
Observation that a $d$-degenerate graph $D$ is $(d+1)$-list colorable. Indeed, let $x_1,\ldots,x_s$ be a degeneracy order of $D$. We pick colors for vertices of $D$ by the reverse ordering $x_s,\ldots,x_1$. For every $i \in \{1,\ldots,s\}$, $x_i$ have at most $d$ neighbors, in $\{x_s,\ldots,x_{i+1}\}$, which are already colored. Thus, there is an available color for $x_i$ from its color list of size $(d+1)$.
    
The 3-smashed graph $G_i \cup G_{i+1}\cup G_{i+2}$ is $3 \Delta$-bounded graph, so it is $3\Delta$-degenerate graph. Therefore, $G_i \cup G_{i+1}\cup G_{i+2}$ is $(3\Delta+1)$-list colorable. It implies that there is a coloring $c_{i+1}$ satisfying the two mentioned conditions.
\end{proof}

Now, Theorem~\ref{thm:boundedDeg} follows directly from Lemma~\ref{lem:boundedDeg_lb} and Lemma~\ref{lem:boundedDeg_ub}.

\begin{theorem}\label{thm:boundedDeg}
    The maximal chromatic number of temporal graphs where each snapshot is $\Delta$-bounded is between $3\Delta+1$ and $5\Delta+1$.
\end{theorem}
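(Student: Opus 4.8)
The plan is to assemble the two preceding lemmas, which already pin down both ends of the interval, so no genuinely new combinatorial work is needed. The quantity at stake is the \emph{maximal} chromatic number of the class, i.e. the supremum of $\chi^t(\mathcal{G})$ over all temporal graphs $\mathcal{G}$ every snapshot of which is $\Delta$-bounded. Showing this value lies between $3\Delta+1$ and $5\Delta+1$ amounts to two separate claims: that every such $\mathcal{G}$ satisfies $\chi^t(\mathcal{G}) \leq 5\Delta+1$ (the upper bound on the maximum), and that at least one such $\mathcal{G}$ attains $\chi^t(\mathcal{G}) \geq 3\Delta+1$ (an existence statement witnessing the lower bound).

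For the upper bound, I would simply invoke Lemma~\ref{lem:boundedDeg_ub}, which asserts $\chi^t(\mathcal{G}) \leq 5\Delta+1$ for every temporal graph all of whose snapshots are $\Delta$-bounded. Since this holds uniformly over the whole class, the maximal chromatic number of the class is at most $5\Delta+1$.

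For the lower bound, the key fact is the remark stated just after the definition of a temporal coloring: by item~1 of that definition, any $k$-coloring must make $c_i$ proper on each 3-smashed graph $S_i(\mathcal{G})$, hence $\chi^t(\mathcal{G}) \geq \chi(S_i(\mathcal{G}))$ for every admissible $i$. I would then take the three $\Delta$-bounded graphs $G_1, G_2, G_3$ produced in the tightness part of Lemma~\ref{lem:boundedDeg_lb}, whose 3-smashed graph $S_2(\mathcal{G}) = G_1 \cup G_2 \cup G_3$ is the complete graph on $3\Delta+1$ vertices and therefore has chromatic number $3\Delta+1$. Regarding $(G_1, G_2, G_3)$ as a temporal graph in the class, the inequality above yields $\chi^t(\mathcal{G}) \geq \chi(S_2(\mathcal{G})) = 3\Delta+1$, so the maximal chromatic number of the class is at least $3\Delta+1$.

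There is no real obstacle here, as the entire content is carried by the two lemmas; the only points needing care are to phrase the lower bound as the existence of a single witnessing temporal graph rather than a universal statement, and to make the reduction $\chi^t(\mathcal{G}) \geq \chi(S_2(\mathcal{G}))$ explicit so that the static clique of Lemma~\ref{lem:boundedDeg_lb} indeed translates into a temporal chromatic lower bound. Combining the two directions gives $3\Delta+1 \leq \chi^t(\mathcal{G}) \leq 5\Delta+1$ for the class, which is exactly the claimed statement.
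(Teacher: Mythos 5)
Your proposal is correct and follows essentially the same route as the paper, which derives the theorem directly from Lemma~\ref{lem:boundedDeg_lb} (whose tightness construction yields a temporal graph with $S_2(\mathcal{G}) = K_{3\Delta+1}$, giving the lower bound via $\chi^t(\mathcal{G}) \geq \chi(S_2(\mathcal{G}))$) and Lemma~\ref{lem:boundedDeg_ub} (the uniform $5\Delta+1$ upper bound). Your added care in distinguishing the universal upper bound from the existential lower-bound witness is a sound and welcome clarification, but not a different argument.
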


As in Theorem~\ref{th:copy}, by duplicating each snapshot, we get the following upper bound for the temporal chromatic number:

\begin{theorem}
Let $\mathcal{G} = (G_1$, $G_2$, \ldots, $G_T)$ be a temporal graph with $T$ even in which $G_{2i-1}=G_{2i}$ for all $i\in[T/2]$ (i.e. there are two consecutive copies for each snapshot). If for all $i \in [T]$, $G_i$ has bounded degree $\Delta$, then $\chi^t(\mathcal{G})\le 3\Delta +1$.
\end{theorem}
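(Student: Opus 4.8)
The plan is to mirror the doubling idea of Theorem~\ref{th:copy}, but to replace its product-of-colorings construction by the list-coloring argument inside Lemma~\ref{lem:boundedDeg_ub}, exploiting that doubling makes every relevant graph genuinely smaller. Write $H_i = G_{2i-1} = G_{2i}$, so that $\mathcal{G} = (H_1,H_1,H_2,H_2,\ldots,H_{T/2},H_{T/2})$ and, exactly as in the proof of Theorem~\ref{th:copy}, $S_{2i-1}(\mathcal{G}) = H_{i-1}\cup H_i$ and $S_{2i}(\mathcal{G}) = H_i\cup H_{i+1}$. Thus every relevant $3$-smashed graph is a union of only \emph{two} $\Delta$-bounded graphs, hence is $2\Delta$-bounded and in particular $2\Delta$-degenerate. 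As in Theorem~\ref{th:copy}, denote by $c_i^1$ and $c_i^2$ the colorings of the two copies of $H_i$.

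The key design choice is to recolor nothing while the edge set changes and to recolor only while it is stable. Across the transition from the second copy of $H_i$ to the first copy of $H_{i+1}$, where the edge set changes from $H_i$ to $H_{i+1}$, I would set $c_{i+1}^1 = c_i^2$; maintaining inductively that $c_i^2$ is a proper coloring of $H_i\cup H_{i+1}$, this single coloring is automatically proper on $S_{2i}(\mathcal{G})$ and $S_{2i+1}(\mathcal{G})$ (both equal to $H_i\cup H_{i+1}$) and, being equal on the two sides, compatible across that transition. All the work then happens during the stable period of $H_{i+1}$, namely in passing from $c_{i+1}^1$ to $c_{i+1}^2$.

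For that step I set up a list-coloring problem as in Lemma~\ref{lem:boundedDeg_ub}. Compatibility of $c_{i+1}^1$ and $c_{i+1}^2$ over $H_{i+1}$ forbids, for each vertex $v$, only the colors $c_{i+1}^1(u)$ over the at most $\Delta$ neighbors $u$ of $v$ in $H_{i+1}$; hence from a palette of $3\Delta+1$ colors each vertex keeps a list $L_v$ with $|L_v|\ge 2\Delta+1$. I then need $c_{i+1}^2$ to be a proper coloring of $H_{i+1}\cup H_{i+2}$ drawn from these lists. Since $H_{i+1}\cup H_{i+2}$ is $2\Delta$-degenerate and a $d$-degenerate graph is $(d+1)$-list-colorable (as shown inside the proof of Lemma~\ref{lem:boundedDeg_ub}), lists of size $2\Delta+1$ suffice, yielding $c_{i+1}^2$ proper on $H_{i+1}\cup H_{i+2} = S_{2(i+1)}(\mathcal{G})$ and compatible with $c_{i+1}^1$ on $H_{i+1}$. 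The base case starts from any proper $(2\Delta+1)$-coloring $c_1^2$ of the $2\Delta$-bounded graph $H_1\cup H_2$ with $c_1^1 = c_1^2$, and the right boundary is handled by treating the nonexistent $H_{T/2+1}$ as the empty graph, so that $c_{T/2}^2$ need only be proper on $H_{T/2}$.

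The only non-routine point is making the two constraints on $c_{i+1}^2$ interact correctly: the fixed compatibility list inherited from $c_{i+1}^1$, and properness on the degenerate graph $H_{i+1}\cup H_{i+2}$. This is resolved precisely by list-colorability of degenerate graphs, which lets the compatibility restriction be absorbed into the lists. The saving over the $5\Delta+1$ of Lemma~\ref{lem:boundedDeg_ub} is then transparent: doubling shrinks the compatibility graph from a $2\Delta$-bounded union to the $\Delta$-bounded $H_{i+1}$ (forbidding $\Delta$ rather than $2\Delta$ colors) and shrinks the relevant smash from a $3\Delta$-degenerate to a $2\Delta$-degenerate graph, giving $(2\Delta+1)+\Delta = 3\Delta+1$ instead of $(3\Delta+1)+2\Delta$.
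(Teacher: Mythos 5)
Your proof is correct and follows essentially the same strategy as the paper's: keep the coloring unchanged across the transition where the edge set changes (so compatibility there reduces to properness of the previous coloring on $H_i\cup H_{i+1}$), and do all recoloring between the two copies of $H_{i+1}$, where compatibility forbids only $\Delta$ colors per vertex and properness on the $2\Delta$-degenerate graph $H_{i+1}\cup H_{i+2}$ is obtained greedily. The only cosmetic difference is that you package the final step as an instance of $(2\Delta+1)$-list-colorability of $2\Delta$-degenerate graphs, whereas the paper counts the at most $\Delta+2\Delta$ forbidden colors directly in a greedy pass; these are the same argument.
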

\begin{proof}
We use the same notations as in the proof of Theorem~\ref{th:copy}. More precisely, we define $H_i=G_{2i-1}=G_{2i}$ and see the temporal graph $\mathcal{G}$ as $(H_1$, $H_1$, $H_2$, $H_2$, $\ldots$, $H_{T/2}$, $H_{T/2})$.

We build the colorings sequentially as follows:

\begin{enumerate}
    \item  Choose $c_1=c_2=c_3$ as a proper $(3\Delta+1)$-coloring of the $2\Delta$-bounded graph $H_1 \cup H_2$. 
    \item For $i>1$, we compute sequentially and greedily $c_{i}$ on each vertex one after another. Briefly, if $i = (2j+1)$ is odd, then choose $c_{2j+1}=c_{2j}$. If $i= 2j$ is even, then we choose $c_{2j}$ such that $c_{2j}$ a proper coloring of $H_j \cup H_{j+1}$, compatible with $c_{2j-1}$ on $H_{j}$. Here is how we build $c_{2j}$ from $c_{2j-1}$:
    
    For every vertex $v$, to compute $c_{2j}(v)$, avoid all the colors in $\{c_{2j-1}(u)|u\in N_{H_j}(v)\}$. 
    This constraint ensures that $c_{2j}$ is compatible with $c_{2j-1}$. Since every vertex $v$ has maximum degree $\Delta$ in $H_j$, it follows that, for every vertex $v$,  up to $\Delta$ colors needs to be avoided when selecting $c_{2j}(v)$. In addition, since we compute the coloring $c_{2j}$ greedily, $c_{2j}(v)$ needs to be different from $c_{2j}(u)$ for every colored neighbors $u$ of $v$ in $H_j \cup H_{j+1}$. Every vertex $v$ has at most $2\Delta$ neighbors in $H_j \cup H_{j+1}$, so we need to avoid up to $2\Delta$ more colors for $c_{2j}(v)$. 
    As we have $(3\Delta+1)$ colors, it is always possible to choose a color for $v$ respecting the two mentioned constrains.

\end{enumerate}
By construction, $(c_1,\dots, c_{i+1})$ is a temporal $3\Delta+1$-coloring of $(G_1,\dots G_{i+1})$. Then, by induction, $(c_1,\dots, c_{T})$ is a temporal $3\Delta+1$-coloring of ${\cal G}$. 
\end{proof}

\section{Coloring temporal degree bounded graphs with grow pace one}\label{sec:grow1}

We consider now the case where the evolution of the set of edges between consecutive snapshots is the simplest: at most one edge is added and at most another one is removed at each time step. We show in this section coloring results depending on the maximal degree $\Delta$ of the graph. Since the complete graph of $\Delta+1$ vertices has maximal degree $\Delta$, we know that $\Delta+1$ colors could be necessary. Figure~\ref{fig:chromatic4} provides a temporal graph with $\Delta=2$ and temporal chromatic number 4.

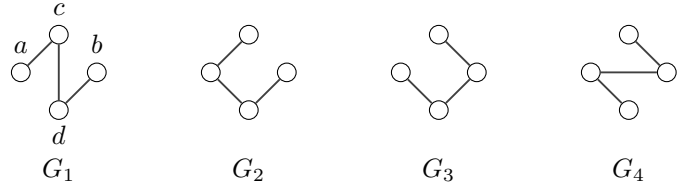
\begin{figure}[ht]
        \centering
    \begin{tikzpicture}
    \tikzstyle{whitenode}=[draw,circle,minimum size=7pt,inner sep=0pt]
    
    \draw (-4,0) node[whitenode,  label=$a$] (a1)   [] {};
    \draw (-3,0) node[whitenode,  label=$b$] (b1)   [] {};
    \draw (-3.5,.5) node[whitenode,  label=$c$] (c1)   [] {};
    \draw (-3.5,-.5) node[whitenode,  label={[shift={(0,-.7)}]$d$}] (d1)   [] {};

    \node[] at (-3.5,-1.3) {$G_1$};
    
    \draw (-1.5,0) node[whitenode] (a2)   [] {};
    \draw (-.5,0) node[whitenode] (b2)   [] {};
    \draw (-1,.5) node[whitenode] (c2)   [] {};
    \draw (-1,-.5) node[whitenode] (d2)   [] {};
    \node[] at (-1,-1.3) {$G_2$};
    
    \draw (1,0) node[whitenode] (a3)   [] {};
    \draw (2,0) node[whitenode] (b3)   [] {};
    \draw (1.5,.5) node[whitenode] (c3)   [] {};
    \draw (1.5,-.5) node[whitenode] (d3)   [] {};
    
    \node[] at (1.5,-1.3) {$G_3$};

    
    \draw (3.5,0) node[whitenode] (a4)   [] {};
    \draw (4.5,0) node[whitenode] (b4)   [] {};
    \draw (4,.5) node[whitenode] (c4)   [] {};
    \draw (4,-.5) node[whitenode] (d4)   [] {};
    
    \node[] at (4,-1.3) {$G_4$};

    \Edge(a1)(c1);
    \Edge(c1)(d1);
    \Edge(d1)(b1);

    \Edge(c2)(a2);
    \Edge(a2)(d2);
    \Edge(d2)(b2);

    \Edge(a3)(d3);
    \Edge(d3)(b3);
    \Edge(b3)(c3);
    
    \Edge(c4)(b4);
    \Edge(b4)(a4);
    \Edge(a4)(d4);
\end{tikzpicture}
    \caption{Example of  temporal graph with {grow pace}~1, each snapshot being a path, with temporal chromatic number~4. 
    }
    \label{fig:chromatic4}
\end{figure}

We first show in Theorem~\ref{thm:delta_grow1} that $\Delta+2$ colors are always enough, and then discuss whether they are necessary or not, depending on the value of $\Delta$.  


\begin{theorem}\label{thm:delta_grow1}
    Let $\mathcal{G} = (G_1,\ldots,G_T)$ be a temporal graph with grow pace 1. If for all $i\in [T]$, $G_i$ is $\Delta$ bounded, then $\chi^t({\mathcal{G}})\leq \Delta+2$.
\end{theorem}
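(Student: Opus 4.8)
The plan is to construct the colorings $c_1,\ldots,c_T$ inductively and online, maintaining the invariant that each computed $c_j$ is a proper coloring of $G_{j-1}\cup G_j\cup G_{j+1}$ and that consecutive colorings are compatible. For the base case I would take $c_1$ to be any proper $(\Delta+2)$-coloring of $G_1\cup G_2$, which exists because grow pace $1$ makes this $2$-smashed graph $(\Delta+1)$-bounded. The whole content is in the inductive step: given $c_i$, produce $c_{i+1}$ using only knowledge up to $G_{i+2}$.

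The key observation is that $c_i$ is already almost the coloring we want. Since $c_i$ is proper on $G_{i-1}\cup G_i\cup G_{i+1}$, it is in particular proper on $G_i\cup G_{i+1}$; hence the naive choice $c_{i+1}:=c_i$ is automatically compatible with $c_i$, and the only possible obstruction to its being a valid $c_{i+1}$ is properness on $S':=G_i\cup G_{i+1}\cup G_{i+2}$. Because grow pace is $1$, the edges of $S'$ on which $c_i$ can fail to be proper all lie in $E_{i+2}\setminus(E_i\cup E_{i+1})$, a set of at most one edge, say $zw$ with $c_i(z)=c_i(w)=:\alpha$. If no such edge exists we simply set $c_{i+1}=c_i$.

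To repair a single conflict I would recolor exactly one endpoint. For a vertex $v$ let $L_v$ be the set of colors in $[\Delta+2]$ not of the form $c_i(u)$ for $u\in N_{G_i\cup G_{i+1}}(v)$. Since each snapshot is $\Delta$-bounded and grow pace is $1$, $\deg_{G_i\cup G_{i+1}}(v)\le\Delta+1$, so $|L_v|\ge1$, and $c_i(v)\in L_v$ because $c_i$ is proper there. Recoloring $w$ with any $\beta\in L_w\setminus\{\alpha\}$ keeps compatibility (by the definition of $L_w$, and because $c_i(w)\neq c_i(u)$ for neighbours $u$) and keeps properness on $S'$ (the only $S'$-neighbour of $w$ outside $G_i\cup G_{i+1}$ is $z$, whose colour $\alpha$ is avoided), while removing the conflict; symmetrically for $z$. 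Such a recoloring is possible as soon as $|L_w|\ge2$ or $|L_z|\ge2$.

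The main obstacle, and the only place where grow pace $1$ is essential, is ruling out the degenerate case $|L_w|=|L_z|=1$. I would argue that $|L_w|=1$ forces $\deg_{G_i\cup G_{i+1}}(w)=\Delta+1$ with all $\Delta+1$ neighbours distinctly colored, which in turn forces \emph{both} the unique added edge and the unique removed edge between $G_i$ and $G_{i+1}$ to be incident to $w$ (otherwise $w$ would exceed degree $\Delta$ in $G_i$ or $G_{i+1}$, or have $\deg_{G_i\cup G_{i+1}}(w)\le\Delta$). By symmetry, $|L_z|=1$ forces both these edges to be incident to $z$. But then the single removed edge and the single added edge would each have to equal $wz$, which is impossible since one lies in $E_i\setminus E_{i+1}$ and the other in $E_{i+1}\setminus E_i$. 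Hence at least one endpoint admits a recoloring, which completes the induction; the final coloring $c_T$, required proper only on $G_{T-1}\cup G_T$, is obtained the same way (indeed $c_T=c_{T-1}$ already works).
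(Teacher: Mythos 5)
Your proof is correct and follows essentially the same strategy as the paper's: proceed inductively (and online), set $c_{i+1}=c_i$ everywhere except possibly at one endpoint of the unique edge of $E_{i+2}\setminus E_{i+1}$ in conflict, and recolor that endpoint with one of the $\Delta+2$ colors avoiding its old color and the colors of its at most $\Delta+1$ neighbours in $G_i\cup G_{i+1}$. The only (harmless) divergence is in how you show that at least one endpoint admits such a spare color: the paper uses the $\Delta$-boundedness of the \emph{future} snapshot $G_{i+2}$ to conclude that one endpoint has degree at most $\Delta-1$ in $G_{i+1}$, whereas you rule out the degenerate case $|L_w|=|L_z|=1$ by noting it would force both the added and the removed edge between $G_i$ and $G_{i+1}$ to equal the conflict edge $wz$, a contradiction --- both arguments are valid.
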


\begin{proof}
    We prove by induction on $i\le T$ that colorings $c_1,\ldots,c_i$ can be computed for $\mathcal{G}$. To compute $c_1$, we need a coloring of $S_1=G_1\cup G_2$. In $S_1$, at most two vertices have degree bigger than $\Delta$ (those two vertices being the extremities of the edge added in $G_2$). Give colors 1 and 2 to those two nodes, and we can greedily color the other nodes with $(\Delta+1)$ colors.
    
    Assume that there is a $(\Delta+2)$-coloring $c_1,\ldots,c_{i}$ satisfying:
    \begin{itemize}
        \item For every $j \in [i]$, $c_j$ is a proper coloring of $G_{j-1} \cup G_{j} \cup G_{j+1}$.
        \item For every $j \in [ i-1]$,  $c_j$ and $c_{j+1}$ are compatible on $G_j \cup G_{j+1}$.
    \end{itemize}
    We show that there exists $c_{i+1}$ such that $c_{i+1}$ is a proper coloring of $G_i \cup G_{i+1} \cup G_{i+2}$; and $c_{i+1}$ and $c_i$ are compatible on $G_i \cup G_{i+1}$.
    
    For all vertices $u$ that are not an endpoint of the edge added in $G_{i+2}$, set $c_i(u)=c_{i+1}(u)$.
    
    Suppose that there exists an edge $uv \in E_{i+2} \setminus E_{i+1}$. Since the temporal graph has \rate~1, this edge is unique.  If $c_i(u) \neq c_i(v)$, then let $c_{i+1}(u) = c_{i}(u)$ and  $c_{i+1}(v) = c_{i}(v)$. Suppose that $c_i(u) = c_i(v)$. Observe that $G_{i+2}$ is $\Delta$ bounded. 
    
    Since the \rate~is $1$  there exists at most one unique edge $e$ in $G_{i+1}$ that does not exists in $G_{i+2}$. This edge cannot be $uv$, as $uv\in G_{i+2}$. Assume that both $u$ and $v$ have degree $\Delta$ in $G_{i+1}$. The removal of this edge $e$ can decrease the degree of either $u$ or $v$, but not both, and the addition of $uv$ make the degree of either $u$ or $v$ become $(\Delta+1)$ in $G_{i+2}$.
    Therefore, either $u$ or $v$ has degree at most $(\Delta-1)$ in $G_{i+1}$. 
    
    It implies that, in $G_i \cup G_{i+1}$, either $u$ or $v$ has degree at most $\Delta$. Without loss of generality, suppose that $u$ has degree at most $\Delta$ in $G_i \cup G_{i+1}$ and set $c_{i+1}(v)=c_i(v)$. Since we use $(\Delta +2)$ colors, there exists a color $c_{i+1}(u) \neq c_{i}(u)$, and for every neighbor $w$ of $u$ in $G_i \cup G_{i+1}$, $c_{i+1}(u) \neq c_i(w)$. In particular $c_{i+1}(u)\neq c_{i+1}(v)$.
    
    The coloring $c_{i+1}$ is proper for $G_i \cup G_{i+1} \cup G_{i+2}$: 
    
    \begin{itemize}
        \item For all $xy\in E(G_i \cup G_{i+1} \cup G_{i+2})$, either $xy\in E(G_i\cup G_{i+1})$ and by induction hypothesis, $c_{i+1}(x)\neq c_{i+1}(y)$, or $xy=uv$ and, the same results holds by previous observation.
        \item For all edges $xy\in E(G_i \cup G_{i+1})$ then either $c_{i}(x)= c_{i+1}(x)$ and $c_{i}(y)= c_{i+1}(y)$, $u=x$ or $u=y$. In the first case, by induction hypothesis, $c_{i}(x)\neq c_{i+1}(y)$. In the two other cases, $c_{i+1}(u)$ differs from the colors of its neighbors in both $c_i$ and $c_{i+1}$. Hence coloring $c_{i+1}$ is compatible with coloring $c_i$.
    \end{itemize} 
    
 By induction, $(c_1,\dots, c_{T})$ is a temporal $\Delta+2$-coloring of ${\cal G}$. 
 \end{proof}

For the cases $\Delta=2$ and 3, we provide temporal graphs that have temporal chromatic number $(\Delta+2)$. Those temporal graphs were found using a program that computes all possible temporal graphs on $(\Delta+2)$ nodes with grow pace 1 with a short lifetime. We ran the program for $\Delta=4$, $n=6$ and $T=4$ (which suffices to ensure that all edges appear at least on one snapshot), and did not find any temporal graph that needs $6$ colors.

\begin{proposition}\label{lemma:P4}
There exists a temporal graph $\cal G$ with \rate~1 and each snapshot being a path such that $\chi^t({\cal G})= 4$.
\end{proposition}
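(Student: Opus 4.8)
The plan is to exhibit the explicit temporal graph $\mathcal{G}$ already drawn in Figure~\ref{fig:chromatic4}: four vertices $a,b,c,d$ and four snapshots $G_1,\dots,G_4$, each a path on three edges, chosen so that consecutive snapshots differ by exactly one removed and one added edge (grow pace $1$). The upper bound $\chi^t(\mathcal{G})\le 4$ is free: since every snapshot is a path it is $\Delta$-bounded with $\Delta=2$, so Theorem~\ref{thm:delta_grow1} already gives $\chi^t(\mathcal{G})\le \Delta+2=4$. Hence the entire content of the proposition is the matching lower bound $\chi^t(\mathcal{G})\ge 4$, i.e.\ that no temporal $3$-coloring exists.

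To set up the lower bound I would first record the two middle $3$-smashed graphs. A short computation shows that $G_1\cup G_2\cup G_3$ is $K_4$ with the single edge $ab$ removed, while $G_2\cup G_3\cup G_4$ is $K_4$ with the single edge $cd$ removed; the two missing edges are disjoint, and this is exactly the feature I would exploit. In any proper $3$-coloring $c_2$ of $K_4-ab$, the pair $c,d$ forms an edge and therefore uses two distinct colors, forcing $a$ and $b$ to share the remaining third color, so $c_2(a)=c_2(b)$. Dually, in any proper $3$-coloring $c_3$ of $K_4-cd$ the edge $ab$ is present, so $c_3(a)\neq c_3(b)$.

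The heart of the argument is then to show that these two local facts are incompatible once $c_2$ and $c_3$ must be compatible in the sense of item~2 of the definition. I would check that each of the four edges $ac,ad,bc,bd$ lies in $E_2\cup E_3$, so compatibility of $c_2,c_3$ yields $c_2(c)\neq c_3(a)$, $c_2(d)\neq c_3(a)$, $c_2(c)\neq c_3(b)$ and $c_2(d)\neq c_3(b)$. Since $\{c_2(c),c_2(d)\}$ are precisely the two colors distinct from the common color $c_2(a)=c_2(b)$, these four inequalities force $c_3(a)$ and $c_3(b)$ both to equal that third color, whence $c_3(a)=c_3(b)$ --- contradicting $c_3(a)\neq c_3(b)$. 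This contradiction rules out any temporal $3$-coloring and completes the lower bound.

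The step I expect to be the genuine obstacle is conceptual rather than computational: no single smashed graph here has chromatic number $4$ (each relevant smashing is $3$-colorable), so the lower bound cannot be read off from item~1 of the definition of a temporal coloring alone. The difficulty is that one must combine the ``merge'' constraint at time~$2$ with the ``split'' constraint at time~$3$ through the compatibility relation, and the chain works only because the \emph{same} pair $\{a,b\}$ is the monochromatic non-edge of one smashing and a true edge of the next. Verifying that the grow pace is indeed $1$ along the sequence and that the claimed four edges all appear in $E_2\cup E_3$ are the only routine checks; the rest is the forcing argument above.
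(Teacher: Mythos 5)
Your proof is correct and follows essentially the same route as the paper: both arguments force $c_2(a)=c_2(b)$ from the smashing $G_1\cup G_2\cup G_3$ (the paper via the triangle $bcd$ plus $a$'s adjacency to $c,d$ in $G_2$, you via identifying that smashing as $K_4-ab$), then use compatibility across $E_2\cup E_3$ to pin $c_3(a)$ and $c_3(b)$ to the single color outside $\{c_2(c),c_2(d)\}$, contradicting the edge $ab\in E_4$. The only cosmetic difference is that the paper fixes colors $1,2,3$ without loss of generality whereas you argue abstractly; the logical content is identical.
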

\begin{proof}
    The temporal graph represented in Figure~\ref{fig:chromatic4} has \rate~1 and each snapshot is a path. We prove that its temporal chromatic number is~$4$.
    
    Suppose it exists a temporal 3-coloring $(c_1,c_2,c_3,c_4)$ on color set $\{1,2,3\}$ for the temporal graph represented in Figure~\ref{fig:chromatic4}. In $G_1 \cup G_2 \cup G_3$, observe that the vertices $b$, $c$, and $d$ form a triangle. Without loss of generality, we can suppose that $c_2(b)=1$, $c_2(c)=2$, and $c_2(d)=3$. Since $a$ is a neighbor of $c$ and $d$ in $G_2$, we have $c_2(a)=1$. Now, observe that $ab$ is an edge in $G_4$. Hence $c_3(a)\neq c_3(b)$ and, since there are only colors $\{1,2,3\}$, one of them belongs to $\{2,3\}$. Therefore, either $c_3(b) \in \{c_2(c),c_2(d)\}$ or $c_3(a) \in \{c_2(c),c_2(d)\}$. This is a contradiction as the colorings $c_2$ and $c_3$ need to be compatible in $G_2\cup G_3$, and in such a graph, both $a$ and $b$ are neighbors of $c$ and $d$.
\end{proof}

\begin{proposition}

There exists a temporal graph $\cal G$ with \rate~1 where each snapshot has maximal degree $\Delta=3$ such that $\chi^t({\cal G})= 5$.
\end{proposition}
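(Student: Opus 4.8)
The plan is to exhibit one explicit temporal graph $\mathcal{G}$ on five vertices, with \rate~1 and every snapshot $3$-bounded, and to prove it admits no temporal $4$-coloring. Combined with Theorem~\ref{thm:delta_grow1}, which already gives $\chi^t(\mathcal{G})\le \Delta+2 = 5$ for every such graph, this forces $\chi^t(\mathcal{G})=5$. The snapshots would be presented in a figure, and I would begin with the routine checks that consecutive snapshots differ by at most one added and one removed edge and that each snapshot has maximum degree $3$; these verifications are purely mechanical.

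The conceptual point worth stressing is that, unlike the bipartite and bounded-degree examples of Section~\ref{sec:arbitrary}, the lower bound here cannot come from a single dense window. With \rate~1 each of $G_{i-1}$ and $G_{i+1}$ contributes at most one edge outside $G_i$, so the three-smashed graph satisfies $|E(S_i(\mathcal{G}))| \le |E_i| + 2 \le 7 + 2 = 9 < 10 = \binom{5}{2}$, where $7 = \lfloor 5\cdot 3/2\rfloor$ is the maximum number of edges of a $3$-bounded graph on five vertices. Hence no window equals $K_5$, and since $K_5$ is the only $5$-vertex graph that is not $4$-colorable, every $S_i(\mathcal{G})$ is $4$-colorable. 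So the fifth color is not forced by properness of a single $c_i$; it can only emerge from the interplay between properness inside several windows and the compatibility constraint across the transitions. This is exactly what makes the construction delicate and explains why it was located by computer search.

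For the lower bound itself I would assume, for contradiction, a temporal $4$-coloring $(c_1,\ldots,c_T)$ on colors $\{1,2,3,4\}$ and propagate forced colors in the spirit of Proposition~\ref{lemma:P4}. Concretely, I would locate a window whose smashing contains a $K_4$, fix (up to renaming) the four colors on that clique, and then repeatedly apply (i) properness of each $c_i$ on $S_i(\mathcal{G})$ and (ii) the compatibility inequalities $c_i(u)\ne c_{i+1}(v)$ for $uv\in E_i\cup E_{i+1}$ to shrink each vertex's list of admissible colors, until two vertices adjacent in some snapshot are forced to receive the same color. An alternative, and possibly cleaner, route is to invoke Theorem~\ref{th:static}: it suffices to show $\chi(static(\mathcal{G}))\ge 5$, which I would establish by exhibiting a $K_5$ inside $static(\mathcal{G})$, that is, five pairwise-conflicting vertex-time pairs $v^i$ realised by the edge sets $E'$ and $E''$ of the static reduction.

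The main obstacle is the case analysis. Because four colors leave genuine slack at every vertex, no single triangle or clique pins everything down, and the contradiction surfaces only after chasing the forced colors through all the transitions while checking that no branch of the case split escapes. Keeping this bookkeeping finite and transparent—ideally by passing to $static(\mathcal{G})$ and pointing to a concrete obstruction subgraph rather than tracking per-vertex color lists by hand—is where the real effort lies.
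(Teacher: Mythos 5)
Your proposal correctly identifies the overall strategy (exhibit an explicit small temporal graph, get the upper bound $\chi^t\le\Delta+2=5$ from Theorem~\ref{thm:delta_grow1}, and refute a hypothetical $4$-coloring by propagating forced colors through the compatibility constraints), and your preliminary observation is both correct and genuinely illuminating: with grow pace~$1$ and five vertices, $|E(S_i)|\le|E_i|+2\le 9<\binom{5}{2}$, so no $3$-smashed window is $K_5$ and every window is individually $4$-colorable; the fifth color can only be forced by the interaction between windows. This matches the paper's example, where $S_2(\mathcal{G})=K_5-bd$ and $S_3(\mathcal{G})=K_5-ac$.

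However, as written this is a proof plan, not a proof, and the gap is the entire content of the statement. The proposition asserts the \emph{existence} of a temporal graph, and you never produce the witness: the snapshots ``would be presented in a figure,'' the mechanical checks ``would'' be done, and the contradiction is to be reached ``until two vertices adjacent in some snapshot are forced to receive the same color'' --- but no window, no clique, and no forced-color chain is actually specified, and you yourself flag the case analysis as the unresolved obstacle. Compare the paper's argument (Figure~\ref{fig:Delta3}): because $S_2(\mathcal{G})=K_5-bd$, any $4$-coloring $c_2$ must identify the colors of $b$ and $d$ and use all four colors on $\{a,c,e\}$ plus $\{b,d\}$; compatibility over $E_2\cup E_3$ then leaves each of $b$ and $d$ a single admissible color in $c_3$, namely their common old color, while the edge $bd$ appears in $G_4$ and so must be properly colored by $c_3$ --- a three-line contradiction with no branching. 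Your expectation that ``no single\ldots clique pins everything down'' and that a long case split is unavoidable is therefore too pessimistic; the right gadget ($K_5$ minus one edge in a window, with that missing edge added two steps later) collapses the analysis. Incidentally, your fallback via Theorem~\ref{th:static} does succeed on the paper's example --- $\{a_2,c_2,e_2,b_3,d_3\}$ induces a $K_5$ in $static(\mathcal{G})$ --- but invoking it still presupposes the construction you have not supplied.
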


\begin{proof}
\begin{figure}[!ht]
    \centering
\begin{tikzpicture}
   
    \tikzstyle{whitenode}=[draw,circle,fill=white,minimum size=7pt,inner sep=0pt]

     \draw (0,0) node[whitenode] (e1) [label=180:$e$] {}
-- ++(0:1cm) node[whitenode] (d1) [ label=0:$d$] {}
-- ++(72:1cm) node[whitenode] (c1) [label=0:$c$] {}
-- ++(144:1cm) node[whitenode] (b1) [label=$b$] {}
-- ++(216:1cm) node[whitenode] (a1) [label=180:$a$] {};

\draw (a1) edge  node {} (c1);
\draw (a1) edge  node {} (e1);
\draw (b1) edge  node {} (e1);

    \node[] at (.5,-.5) {$G_1$};
    
     \draw (2.75,0) node[whitenode] (e2) {}
-- ++(0:1cm) node[whitenode] (d2) {}
-- ++(72:1cm) node[whitenode] (c2) {}
-- ++(144:1cm) node[whitenode] (b2) {}
-- ++(216:1cm) node[whitenode] (a2) {};

\draw (a2) edge  node {} (e2);
\draw (a2) edge  node {} (d2);
\draw (b2) edge  node {} (e2);

    \node[] at (3.25,-.5) {$G_2$};
    
     \draw (5.5,0) node[whitenode] (e3) {}
 ++(0:1cm) node[whitenode] (d3) {}
-- ++(72:1cm) node[whitenode] (c3) {}
-- ++(144:1cm) node[whitenode] (b3) {}
-- ++(216:1cm) node[whitenode] (a3) {};

\draw (a3) edge  node {} (e3);
\draw (a3) edge  node {} (d3);
\draw (b3) edge  node {} (e3);
\draw (c3) edge  node {} (e3);

    \node[] at (6,-.5) {$G_3$};
    
     \draw (8.25,0) node[whitenode] (e4) {}
 ++(0:1cm) node[whitenode] (d4) {}
-- ++(72:1cm) node[whitenode] (c4) {}
 ++(144:1cm) node[whitenode] (b4) {}
-- ++(216:1cm) node[whitenode] (a4) {};

\draw (a4) edge  node {} (d4);
\draw (b4) edge  node {} (e4);
\draw (c4) edge  node {} (e4);
\draw (b4) edge  node {} (d4);
\draw (a4) edge  node {} (e4);

    \node[] at (8.75,-.5) {$G_4$};
    
     \draw (2.75,2.75) node[whitenode] (e5) {}
-- ++(0:1cm) node[whitenode] (d5) {}
-- ++(72:1cm) node[whitenode] (c5) {}
-- ++(144:1cm) node[whitenode] (b5) {}
-- ++(216:1cm) node[whitenode] (a5) {};

\draw (a5) edge  node {} (e5);
\draw (a5) edge  node {} (c5);
\draw (a5) edge  node {} (d5);
\draw (b5) edge  node {} (e5);
\draw (c5) edge  node {} (e5);

    \node[] at (3.25,2.25) {$S_2(\mathcal{G})$};
    
     \draw (5.5,2.75) node[whitenode] (e6) {}
-- ++(0:1cm) node[whitenode] (d6) {}
-- ++(72:1cm) node[whitenode] (c6) {}
-- ++(144:1cm) node[whitenode] (b6) {}
-- ++(216:1cm) node[whitenode] (a6) {};

\draw (a6) edge  node {} (e6);
\draw (a6) edge  node {} (d6);
\draw (b6) edge  node {} (e6);
\draw (c6) edge  node {} (e6);
\draw (b6) edge  node {} (d6);

    \node[] at (6,2.25) {$S_3(\mathcal{G})$};

    \end{tikzpicture}
    
    \caption{A temporal graph with temporal chromatic number~5, consisting of snapshots where each individual snapshot has a maximum degree of~$3$.  The graphs above correspond to the 3-smashed graphs. Only the first snapshot is explicitly labeled.}
    \label{fig:Delta3}
    \end{figure}
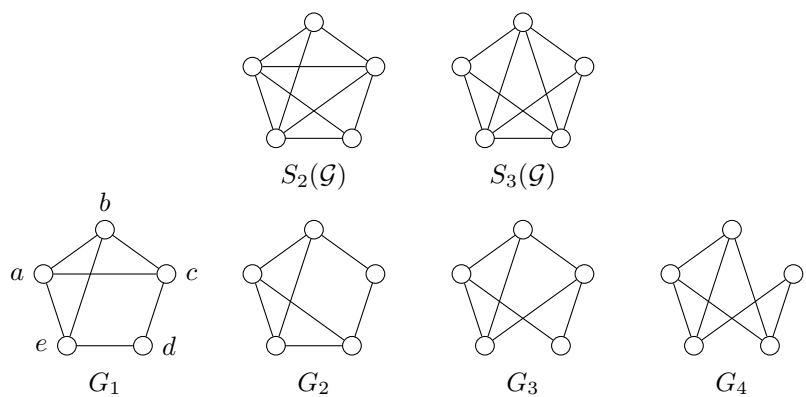

     The temporal graph represented in Figure~\ref{fig:Delta3} has \rate~1 and each snapshot has maximum degree 3. We prove that its temporal chromatic number is~$4$.
     
    Suppose it exits a temporal 4-coloring $(c_1,c_2,c_3,c_4)$ on color set $\{1,2,3,4\}$ for the temporal graph represented in Figure~\ref{fig:Delta3}.  Observe that $S_2(\mathcal{G})$ corresponds to the complete graph on 5 vertices without the edge $bd$. Hence, without loss of generality, we can assume $c_2(e)=1$, $c_2(b)=c_2(d)=2$, $c_2(a)=3$ and $c_2(c)=4$.

    Using compatibility and the edges of the second snapshot, we have $c_3(b) \notin \{c_2(e),c_2(a),c_2(c)\}$. In the same way, we get $c_3(d) \notin \{c_2(e),c_2(a),c_2(c)\}$. As we are using only~$4$ colors we get that $c_3(d)=c_3(b)$. This cannot be possible, since this edge is added in $G_4$.
\end{proof}

For $\Delta\le3$, we have temporal graphs with grow pace 1 with temporal chromatic number $\Delta+2$ colors. However, we believe that for $\Delta$ large enough (and maybe even already $\Delta=4$), $\Delta+1$ colors are always enough. Here is our intuition:

Let $\mathcal{G}$ be a temporal graph, and let $G$ denote the graph obtained by taking the smash of all snapshots of $\mathcal{G}$. Clearly, $\chi(G) \geq \chi^t(\mathcal{G})$. Consequently, if $\chi^t(\mathcal{G}) = \Delta + 2$, then $G$ has a chromatic number of at least $\Delta + 2$. This implies that $G$ contains at least $\Delta + 2$ vertices of degree at least $\Delta + 1$. Now, suppose $G_1$ is $\Delta$-bounded. To accommodate this number of high-degree vertices in $G$, we must add at least $\frac{\Delta}{2} + 1$ edges. However, given that $\mathcal{G}$ has \rate~1, this implies that $\mathcal{G}$ must have a lifetime of at least $\frac{\Delta}{2} + 1$. With a lifetime large enough, we do not think we can build a temporal graph with temporal chromatic number $\Delta+2$.

\section{Open Questions}\label{sec:conc}

In this paper, we have introduced a notion of coloring compatibility between consecutive snapshots in temporal graphs. The subsequent notion of temporal chromatic number leads to several open problems in graph theory. In particular, restricting the class of graphs for each snapshot or the number of changes between two consecutive snapshots open numerous questions.

Here are some questions that we think are of great interest for future work:
\begin{itemize}
    \item If each snapshot is a tree, does there exist a temporal graph needing 7 or 8 colors?
    \item If each snapshot is $\Delta$ bounded and the grow pace is 1, does there exist a temporal graph that needs $\Delta+2$ colors for some or all $\Delta\ge4$?
    \item For temporal graphs with grow pace 1 where each snapshot is bipartite, can we temporally 4-color them? 
    \item What other constraints can be considered? For example, what can we say about the temporal chromatic number if we limit the number of neighbors that are changed for each vertex?
    \item Our construction in Theorem~\ref{thm:delta_grow1} starts from an arbitrary coloring $c_i$ to compute $c_{i+1}$ with $\Delta+2$ colors. On the other hand, Theorem~\ref{th:7colors} shows that some given $c_i$ with 6 colors can force $c_{i+1}$ to use 7 colors.
    
    This leads to the question of which problems can be solved from any given coloring, and which ones need to maintain some property over the coloring of each snapshot?
    
    For example, if each snapshot is a tree and we are given a 9-coloring $c_1$ of $G_1\cup G_2$, can we get a temporal coloring and some $I$ such that, for each $i\ge I$, $c_i$ uses at most 8 colors? 
    
    \item Can we find notions of compatibility for other graph problems, such as Maximal Matching and Maximal Independent Set?
\end{itemize}



\bibliographystyle{plainurl}
\bibliography{ref}
\end{document}